\DeclareMathOperator{\T}{\mathcal{T}}
\newcommand{\todo}[2]{\hl{#1 : #2}}
\newcommand{\probgen}{$k$ $X$-spanning $Y$-disjoint Branching\xspace}
\newcommand{\probsinglesource}{$k$ Single Source $X$-spanning $Y$-disjoint Branching\xspace}
\newcommand{\probgenarg}[1]{$k+1$ $X$-spanning $Y$-disjoint Branching\xspace}
\newcommand{\NAESAT}{\textsc{NAE$+3$-SAT}}
\newtheorem{theorem}{Theorem}
\newtheorem{lemma}{Lemma}
\theoremstyle{definition}
\newtheorem{problem}{Problem}
\newtheorem{proposition}{Proposition}
\title{Edge-Disjoint Branchings in Temporal Graphs\thanks{Emails of authors: \{\texttt{campos,raul.lopes}\}\texttt{@lia.ufc.br}, \texttt{andrea.marino.unifi.it}, \texttt{anasilva@mat.ufc.br}.}}
\author{Victor Campos\thanks{Departamento de Computa\c{c}\~ao, Universidade Federal do Cear\'{a}, Fortaleza, CE, Brazil.} \and Raul Lopes\footnotemark[2] \and Andrea Marino\thanks{Dipartimento di Sistemi, Informatica, Applicazioni, Universit\`{a} degli Studi di Firenze, Firenze, Italy.} \and Ana Silva\thanks{Departamento de Matem\'{a}tica, Universidade Federal do Cear\'{a}, Fortaleza, CE, Brazil.}}
\begin{document}
%\tracingall
\maketitle

%\section{Introduction}
\iffalse
\todo{Ana}{I put here some questions that I find interesting. Please feel free to add your questions.}

\begin{enumerate}
    \item If $T$ is bounded, is the problem of finding truly-disjoint temporal-spanning branchings polynomial? What about if the treewidth is bounded for vertex-spanning?
    \item k sets of roots is equivalent to k singleton  roots in temporal spanning? Probably not
    
    \item In the reductions, to extend the $\NP$-completeness to bigger values of $k$, I considered additional roots as containing all temporal vertices. Is there a less artificial way?
    \item What about the undirected case?

    \item Can this problem benefit from the use of bidimensionality?
    \item Is there a notion of temporal tree-decomposition? 
    \item Move out in a Lemma: if it is hard for k is hard for k+1.
\end{enumerate}
\fi

%%%%%%%%%%%%%%%%%%%%%%%%%%%%%%%%%%%%%%%%%%%%%%%%%%%%%%%%%%%%%%%%%%%%%%%%%%%%%%%%%%%%%%%%%%%%%%%%%%%%%%%%%%%%%%%%%%%%%%%%
%%%%%%%%%%%%%%%%%%%%%%%%%%%%%%%%%%%%%%%%%%%%%%%%%%%%%%%%%%%%%%%%%%%%%%%%%%%%%%%%%%%%%%%%%%%%%%%%%%%%%%%%%%%%%%%%%%%%%%%%
%%%%%%%%%%%%%%%%%%%%%%%%%%%%%%%%%%%%%%%%%%%%%%%%%%%%%%%%%%%%%%%%%%%%%%%%%%%%%%%%%%%%%%%%%%%%%%%%%%%%%%%%%%%%%%%%%%%%%%%%
%%%%%%%%%%%%%%%%%%%%%%%%%%%%%%%%%%%%%%%%%%%%%%%%%%%%%%%%%%%%%%%%%%%%%%%%%%%%%%%%%%%%%%%%%%%%%%%%%%%%%%%%%%%%%%%%%%%%%%%%
\begin{abstract}
    A temporal digraph ${\cal G}$ is a triple $(G, \gamma, \lambda)$ where $G$ is a digraph, $\gamma$ is a function on $V(G)$ that tells us the timestamps when a vertex is active, and $\lambda$ is a function on $E(G)$ that tells for each $uv\in E(G)$ when $u$ and $v$ are linked.
    Given a static digraph $G$, and a subset $R\subseteq V(G)$, a spanning branching with root $R$ is a subdigraph of $G$ that has exactly one path from $R$ to each $v\in V(G)$. In this paper, we consider the temporal version of Edmonds' classical result about the problem of finding $k$ edge-disjoint spanning branchings respectively rooted at given $R_1,\cdots,R_k$. We introduce and investigate different definitions of spanning branchings, and of edge-disjointness in the context of temporal graphs.
    A branching ${\cal B}$ is vertex-spanning if the root is able to reach each vertex $v$ of $G$ at some time where $v$ is active, while it is temporal-spanning if $v$ can be reached from the root at every time where $v$ is active. On the other hand, two branchings ${\cal B}_1$ and ${\cal B}_2$ are edge-disjoint if they do not use the same edge of $G$, and are temporal-edge-disjoint if they can use the same edge of $G$ but at different times. This lead us to four definitions of disjoint spanning branchings and we prove that, unlike the static case, only one of these can be computed in polynomial time, namely the temporal-edge-disjoint temporal-spanning branchings problem, while the other versions are $\NP$-complete, even under very strict assumptions.
\end{abstract}

\section{Introduction}

A \emph{temporal digraph} is a digraph that exists in time, meaning that, given a digraph $G$, vertices might be active or inactive at a certain timestamp, and edges might have a delay, leaving a vertex at a timestamp, but only arriving later.
%Many different formal definitions have been given in the literature (see e.g. the survey~\cite{}), and the one adopted here will be presented later. (maybe discuss about applications instead?)
In this paper we deal with \emph{disjoint spanning branchings} in temporal graphs, which are well-understood structures in static graphs. Given a digraph $G$, and a subset $R\subseteq V(G)$, we say that $H\subseteq G$ is a \emph{spanning branching} of $G$ with root $R$ if $V(H) = V(G)$, and $H$ contains exactly one path between some $r\in R$ and $u$, for each $u\in V(G)$. Given subsets $R_1,\cdots,R_k$, a classical result by Edmonds~\cite{edmonds1973edge} gives a necessary and sufficient condition for the existence of $k$ edge-disjoint branchings with roots $R_1,\cdots,R_k$, respectively. His result also gives a polynomial algorithm that constructs these branchings.  
%
%However, in the context of temporal graphs, there can be more than one possible definition of disjoint spanning branchings, and we will show that, depending on the adopted definition, we get different complexity results. Our results complete and extend the result of~\cite{KKK00}, where the authors proved that Edmond's Theorem does not hold on temporal graphs considering a particular definition of disjoint spanning branchings, which we will call here edge-disjoint vertex-spanning.
%, but as we will see later, this is only partially true as it holds in one of our cases.

When translating concepts to temporal graphs, it is often the case that theorems coming from static graph theory can hold or not depending on the adopted definition. Indeed, in~\cite{KKK00} the authors give an example where Edmond's result on branchings does not hold on the temporal context. However, as we will see later, their concept is just one of many possible definitions, and that in fact there is even one case where polinomiality holds.

Another example of such behavior is the validity of Menger's Theorem. It has been shown that the edge version of Menger's Theorem holds~\cite{B.96}, even if one considers weights on the edges~\cite{ACGKS.19}. However, the vertex version of Menger's Theorem holds or not, depending on how one interprets what a cut should be. If a cut is understood as a subset of $V(G)$, then Menger's Theorem does not hold~\cite{B.96,KKK00}; and if it is understood as a subset of the appearances of vertices in time (alternatively, a cut can be seen as deactivating vertices at some timestamps), then Menger's Theorem holds~\cite{MMS.19}.

%\todo{Ana: }{mention KKK again here}

\paragraph{Our contribution.} Given a temporal digraph ${\cal G}$ with base static digraph $G$, and subsets of vertices $R_1,\cdots,R_k$ in time, i.e. sets of pairs $(u,t)$ where $u$ is a vertex of $G$ and $t$ a timestamp, here we investigate the many variations of finding (pairly) disjoint spanning branchings with roots $R_1,\cdots,R_k$. Spanning can mean that one wants to pass by at least one appearance of each  $u\in V(G)$ (called \emph{vertex spanning}), or by all appearances of each $u\in V(G)$ (called \emph{temporal spanning}).
Similarly, edge-disjoint can have different interpretations, as it can refer to edges of $G$ or to the appearances of these edges in ${\cal G}$. We say that two branchings are \emph{edge-disjoint} if they do not share any edge of $G$, and that they are \emph{temporal-edge-disjoint} (or \emph{t-edge-disjoint} for short) if they do not share any appearance of an edge of $G$ in  ${\cal G}$.
We found that the only case in which this problem is polynomial (as its static counterpart) is when we want t-edge-disjoint temporal-spanning branchings. We also found that if vertices are \emph{eternal} (this is the more popular case where vertices are always active), the problem is polynomial for temporal-spanning branchings and \NP-complete otherwise. Our results are summarized in Table \ref{tab:results} and detailed in the following main theorem.

\begin{theorem}
Let ${\cal G}$ be a temporal digraph with base digraph $G$, and consider subsets of vertices in time, $R_1,\cdots,R_k$. The problem of finding $k$ branchings rooted at $R_1,\cdots,R_k$ is: 
\begin{enumerate}
    \item Polynomial for t-edge-disjoint temporal-spanning, \label{item:one}
    \item $\NP$-complete for edge-disjoint temporal-spanning even if: $G$ is a star, and each snapshot has constant size; or if ${\cal G}$ has lifetime~3.
    And if vertices are eternal or ${\cal G}$ has lifetime~2, then edge-disjoint temporal-spanning becomes polynomial. \label{item:two}
    
    \item $\NP$-complete for edge-disjoint vertex-spanning even if $G$ is a DAG, the lifetime of ${\cal G}$ is~2, and vertices are eternal.\label{item:three}
    \item $\NP$-complete for t-edge-disjoint vertex-spanning  even if $G$ is a DAG, the lifetime of ${\cal G}$ is~2, and vertices are eternal.\label{item:four}
\end{enumerate}
%and it is $\NP$-complete if one wants edge-disjoint temporal spanning, or edge-disjoint vertex spanning, or t-edge-disjoint vertex spanning branchings.
\label{thm:main}
\end{theorem}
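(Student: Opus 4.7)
The plan is to handle the four items separately, using a time-expansion construction combined with Edmonds' classical theorem for the polynomial cases, and gadget-based reductions from SAT-like problems for the hardness results.

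For item~\ref{item:one}, I would build the time-expansion digraph $D$ of ${\cal G}$: its nodes are the active appearances $(v,t)$, and for each temporal arc $uv$ departing at time $t$ and arriving at time $t'$ there is a static arc from $(u,t)$ to $(v,t')$. For each root set $R_i$, I add a fresh super-root $s_i$ with arcs to every $(v,t)\in R_i$. Directed paths in $D$ from $s_i$ to $(v,t)$ correspond bijectively to temporal paths in ${\cal G}$ from $R_i$ that reach $v$ by time $t$, and two arcs of $D$ are distinct iff they represent distinct appearances of an edge of $G$. Hence finding $k$ t-edge-disjoint temporal-spanning branchings in ${\cal G}$ reduces exactly to finding $k$ edge-disjoint spanning branchings of $D$ rooted at $s_1,\ldots,s_k$, which Edmonds' theorem solves in polynomial time.

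For item~\ref{item:two}, I would prove the star hardness by a reduction from \NAESAT: each leaf represents a variable, and the two candidate branchings containing its incident edge encode the two truth values, while the constant-size snapshots enforce the not-all-equal constraints of the clauses via tight timestamp schedules. The lifetime-$3$ hardness uses a slightly more elaborate gadget that spreads the clause test across the three timestamps. For the polynomial sides: with eternal vertices, every appearance of a vertex is active, so temporal-spanning is equivalent to demanding that each $B_i$ be a static spanning branching of $G$ rooted at the projection of $R_i$, reducing the problem to Edmonds' theorem on the static base digraph; with lifetime~$2$ a similar two-snapshot analysis shows the problem collapses to a matroid-style allocation that I expect to resolve via matching techniques.

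For items~\ref{item:three} and~\ref{item:four}, both with DAG, lifetime~$2$ and eternal vertices, I would reduce from a partition-like variant of \NAESAT. The DAG would consist of two layers, one per timestamp, with source vertices assigned to roots and clause vertices at the bottom that every branching has to reach. A variable gadget would be a pair of parallel arcs (for the edge-disjoint case of item~\ref{item:three}) or two distinct appearances of the same arc at times $1$ and $2$ (for the t-edge-disjoint case of item~\ref{item:four}), so that distributing these objects across the $k$ branchings encodes a truth assignment; clause gadgets at the sinks then enforce the not-all-equal condition on the incident literals. The distinction between edge-disjointness and t-edge-disjointness leads to structurally different but analogous constructions.

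The main obstacle will be designing variable and clause gadgets that simultaneously respect all the tight combinatorial restrictions (star and constant-size snapshots or lifetime~$3$ in item~\ref{item:two}; DAG together with lifetime~$2$ and eternal vertices in items~\ref{item:three}--\ref{item:four}) while preserving a tight correspondence between branching choices and truth assignments. By contrast, the polynomial algorithms of items~\ref{item:one} and~\ref{item:two} should follow in a relatively uniform way once the correct auxiliary static digraph is identified and Edmonds' theorem is applied.
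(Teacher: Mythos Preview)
Your plan has the right shape, but there are two concrete gaps.

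For item~\ref{item:one}, your time-expansion $D$ omits the ``waiting'' arcs $(v,t)(v,t+1)$ that encode the possibility of staying at an active vertex between consecutive timestamps; without them, temporal walks that wait do not correspond to paths in $D$, so the claimed bijection fails. Moreover, even after adding these arcs, a single copy of each is not enough: all $k$ branchings may legitimately wait on the same vertex at the same time, yet edge-disjointness in $D$ would forbid this. The paper's construction adds $k$ parallel copies of every waiting arc, and this multiplicity is precisely what makes the reduction to Edmonds' theorem correct.

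For the eternal-vertices polynomial case in item~\ref{item:two}, your claim that temporal-spanning collapses to a static spanning branching of $G$ rooted at the projection of $R_i$ is not right: projecting to $G$ discards the timing of the edges, but reaching $(v,1)$ still requires either $(v,1)\in R_i$ or a temporal edge \emph{arriving at time~$1$}, which need not exist even when $G$ is spanned from the projected root. The paper instead decomposes the instance timestamp by timestamp into static digraphs $G_0,\ldots,G_\T$, with root sets $R_i^j$ that absorb every vertex active at both $j-1$ and $j$, and applies Edmonds to each $G_j$; this works more generally whenever each $\gamma(v)$ is a single interval. As smaller divergences: the lifetime-$3$ hardness in the paper comes from a reduction from \textsc{$2$-Weak Disjoint Paths} rather than from a SAT variant, and items~\ref{item:three}--\ref{item:four} are handled by a single \NAESAT\ reduction whose two branchings are simultaneously edge-disjoint and t-edge-disjoint, so no separate construction is needed for the t-edge case.
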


As said before, Edmonds' condition is the characterization behind the polynomial algorithm for finding $k$ edge disjoint spanning branchings in static digraphs. Because of our \NP-completeness results, it is worth remarking that, unless \P=\NP, any such characterization for the \NP-complete cases in temporal digraphs should be checkable in exponential time, differently from the one provided by Edmonds. 

\sloppy Finally, our reductions further implies that, in the case of edge-disjoint temporal-spanning, even if the base graph $G$ is a star, the problem cannot be solved by an algorithm running in time $O^*(2^{o(\T)})$ unless ETH fails, where $\T$ is the lifetime of ${\cal G}$. Moreover, in the vertex-spanning variations, also the problem cannot be solved in $O^*(2^{o(n+m)})$ under the same assumption, where $n$ and $m$ are respectively the number of nodes and edges of the base digraph of ${\cal G}$.
%In Section~\ref{sec:edmond}, we further discuss some of these natural characterizations. 

\begin{table}[t]
\small
    \centering
    \begin{tabular}{|c|c|c|c|c|}
    \cline{2-5}
    \multicolumn{1}{c|}{} &  \multicolumn{2}{c|}{\textsc{not eternal vertices}} & \multicolumn{2}{c|}{\textsc{eternal vertices}}\\
    \cline{2-5}
     \multicolumn{1}{c|}{} &  \textsc{edge-disjoint} & \textsc{t-edge-disjoint} & \textsc{edge-disjoint} & \textsc{t-edge-disjoint}\\
    \hline
    \textsc{temporal-} & \multirow{2}{*}{Poly} & \multirow{2}{*}{\NP-c} & \multirow{2}{*}{Poly} & \multirow{2}{*}{Poly}\\
    \textsc{spanning} & & & & \\
    \hline
    \textsc{vertex-} & \multirow{2}{*}{\NP-c} & \multirow{2}{*}{\NP-c} & \multirow{2}{*}{\NP-c} & \multirow{2}{*}{\NP-c}\\
    \textsc{spanning} & & & & \\
    \hline
    \end{tabular}
    
    \caption{Our results. Vertices are eternal if they are always active.}
    \label{tab:results}
\end{table}

\paragraph{Related Work:} While it is easy to imagine a variety of graph problems that can profit from considering changes in time, it is hard to pin-point when the study of temporal graphs and similar structures began. Nevertheless, in the last decade or so, it has attracted a lot of attention from the community, with a considerable number of papers being published in the field (we refer the reader to the surveys~\cite{M15,LVM.18}). It is worth observing, however, that different names have been used for denoting temporal networks (even though the basic notion was almost the same), such as, for example, dynamic networks~\cite{BhadraF03}, time-varying graphs~\cite{CFQS12}, evolving networks~\cite{BorgnatFGMRS07}, and link streams~\cite{LVM.18}. Also, many works consider a temporal graph ${\cal G}$ as having vertices that are always active, and edges have the same starting and ending time~\cite{KKK00,MMS.19,ACGKS.19,XFJ03,SQFCA11}. While models where edges that have a delay are more common~\cite{CFQS12,WCHLX14}, models where nodes can be inactive have already been considered in~\cite{CFQS12,LVM.18}. 
%In this paper, we use the general model, where both vertices and edges appears and disappear and edges can have (different) duration. Our hardness results extends also to 

A path in temporal graphs is generally understood as a sequence of edges respecting time, i.e. the arrival time in each vertex of the path must be lower than the departing time of the next edge taken. In this context, a number of metrics can be related to a path, such as earliest arrival time, latest departure time, minimum number of temporal edges, and minimum traveling time \cite{WCHLX14}. When vertices can be inactive, we have to further ensure that, when waiting for the next edge on a certain vertex, it must remain active in the waiting period~\cite{CFQS12}. In this scenario, the definitions of reachability and connectivity change accordingly, and it is natural to ask how well-known structures and results from static graph theory change taking into account the temporal constraint.

Temporal definitions of trees~\cite{LVM.18} and (minimum) spanning trees~\cite{HFL2015}, which are related to our definition of branching, have been proposed and investigated, and usually consists of ensuring that the root-to-node path in the tree is a valid temporal path. Analogously, temporal cuts from a vertex $s$ to $t$ aim to break any temporal path from $s$ to $t$ and can be related to extending the max-flow min-cut Theorem to temporal graphs~\cite{ACGKS.19}. And as we have already mentioned, different conclusions have been made about a temporal version of Menger's Theorem depending on the adopted translation in terms of temporal graphs~\cite{B.96,KKK00,MMS.19}. 
%This is why we decided to use a definition as general as possible. 
%A lot of work has been done on computing the various types of minimum paths ~\cite{} (note that a path

Edmond's Theorem on disjoint branchings is a classical theorem in graph theory, with many distinct existing proofs (e.g. Lov\'asz~\cite{Lovasz76}, Tarjan~\cite{Tarjan74}, and Fulkerson and Harding~\cite{FH76}), and has many interesting consequences on digraph theory (e.g., one can derive Menger's Theorem from it, characterize arc-connectivity~\cite{Shiloach79}, characterize branching cover~\cite{F79}, ensure integer decomposition of the polytope of branchings of size $k$~\cite{McD83}, etc). As far as we know, the only other time that Edmond's Theorem has been investigated on the temporal context has been in~\cite{KKK00}, where the authors give an example where the theorem does not hold. The definition used by them falls into our category of edge-disjoint vertex-spanning branchings, which we have proved to be $\NP$-complete even under very strict constraints. %\todo{ANa}{say that an equality might exist?}

\paragraph{Structure of the paper.} The paper is organized as follows. In Section~\ref{sec:defs}, we formalize the definitions of spanning branchings and disjointness, also showing that having multiple roots in each of the $k$ branchings is computationally equivalent to having a single root for all of the $k$ branchings. In Section~\ref{sec:temporalSpanning}, we present the results about temporal-spanning branchings. In Section~\ref{sec:vertexSpanning} we present our results concerning vertex-spanning branchings. Finally, in Section~\ref{sec:conclusions}, we draw our conclusions and make some final remarks.
%%%%%%%%%%%%%%%%%%%%%%%%%%%%%%%%%%%%%%%%%%%%%%%%%%%%%%%%%%%%%%%%%%%%%%%%%%%%%%%%%%%%%%%%%%%%%%%%%%%%%%%%%%%%%%%%%%%%%%%%%%%%%%
%%%%%%%%%%%%%%%%%%%%%%%%%%%%%%%%%%%%%%%%%%%%%%%%%%%%%%%%%%%%%%%%%%%%%%%%%%%%%%%%%%%%%%%%%%%%%%%%%%%%%%%%%%%%%%%%%%%%%%%%%%%%%%
%%%%%%%%%%%%%%%%%%%%%%%%%%%%%%%%%%%%%%%%%%%%%%%%%%%%%%%%%%%%%%%%%%%%%%%%%%%%%%%%%%%%%%%%%%%%%%%%%%%%%%%%%%%%%%%%%%%%%%%%%%%%%%
%%%%%%%%%%%%%%%%%%%%%%%%%%%%%%%%%%%%%%%%%%%%%%%%%%%%%%%%%%%%%%%%%%%%%%%%%%%%%%%%%%%%%%%%%%%%%%%%%%%%%%%%%%%%%%%%%%%%%%%%%%%%%%

\section{The Temporal Disjoint Branchings Problems}\label{sec:defs}
This section is devoted to formally define the several concepts of temporal graphs and disjoint branchings we introduce in this paper. 
A temporal digraph ${\cal G}$ is a triple $(G, \gamma, \lambda)$ where $G$ is a digraph and $\gamma$ and $\lambda$ are functions on $V(G)$ and $E(G)$, respectively, that tell us when the vertices and the edges appear. More formally, for each $v\in V(G)$ we have $\gamma(v)\subseteq \mathbb{N}$, and for each edge $e\in E(G)$ we have $\lambda(e)\subseteq \mathbb{N}\times \mathbb{N}$. Also, if $(t, t') \in \lambda(uv)$, then $t\le t'$, $t\in \gamma(u)$ and $t'\in \gamma(v)$. Here, we consider only finite temporal digraphs, i.e., $\T=\max \bigcup_{v\in V(G)}\gamma(v)$ is defined and is called the \emph{lifetime of $\cal G$}. We call $G$ the \textit{base digraph of $\cal G$}
%In this paper, we consider only \textit{finite temporal graphs} in which $G$ and $T$ are finite; hence $T$ above is well defined.
In what follows, unless said otherwise, we work on general digraphs, i.e., directions, loops and multiple edges are allowed.

In particular, if $\T$ is the lifetime of ${\cal G}=(G,\gamma,\lambda)$, if $\gamma(v)=[\T]$ for each $v\in V(G)$, and $t=t'$ for every $(t,t')\in \lambda(E(G))$, then the above definition corresponds to the definition of temporal graph given in \cite{KKK00} and many other works. The above definition also generalizes the definition of stream graph given in~\cite{LVM.18}, and of time-varying graphs given in~\cite{CFQS.11}.

The \textit{vertices} and \textit{edges} of ${\cal G}$ are the vertices and edges of $G$. We say that a vertex $v$ \textit{is active} at time $t$ if $t\in \gamma(v)$ and $v$ \textit{is active} from $t_1$ to $t_2$ if $v$ is active for every time $t$ with $t_1 \leq t \leq t_2$. 
%
%To this aim, for the sake of explanation, we use the notion of temporal vertices and temporal edges as described next and we refer to a new equivalent temporal digraph similar to the one proposed in~\cite{CFQS.11} and~\cite{akrida2019temporal}. \todo{Ana}{ I think this is more similar to the one in CFQS.11}\todo{Andrea}{I put both: correct?}
%Indeed, a temporal digraph $\mathcal{S}=(G,\gamma,\lambda)$ can be also seen as a pair of digraphs $(G,G_T)$ where $G_T = (V_T,E_T)$.
The set $V_T$ of \textit{temporal vertices} is the set $\{(v,t)\mid v\in V(G)\ and\ t\in \gamma(v)\}$, and the set $E_T$ of \textit{temporal edges} is the set $\{(u,t)(v,t')\mid e=uv\in E(G)\ and\ (t,t')\in \lambda(e)\}$. 
Observe that a temporal digraph $\mathcal{G}=(G,\gamma,\lambda)$ can be also seen as a pair of digraphs $(G,G_T)$ where $G_T = (V_T,E_T)$. This is similar to what has been proposed in~\cite{CFQS.11} and~\cite{ACGKS.19}. We call the digraph $G_T$ the \emph{$(\gamma,\lambda)$-digraph of ${\cal G}$}.
%We call the pair $G_T=(V_T,E_T\cup E'_T)$ where $E'_T=\{(u,t)(u,t+1)\mid t, t+1\in \gamma(u)\}$.

Since in our more general case, also vertices appear and disappear, the definition of \emph{walk} must take into account that it is possible to wait only on vertices which are active, as formally defined next. Given temporal vertices $s_1,s_k\in V_T$, an \textit{$s_1,s_k$-temporal walk} in $(G,G_T)$ 
%a temporal digraph ${\cal G}$ 
is a sequence of temporal vertices and temporal edges, $(s_1, \ldots, s_k)$, that either goes through a temporal edge, or stays on different copies of the same vertex of $G$. More formally: if $s_i$ is a temporal edge, then $s_{i-1}$ and $s_{i+1}$ are temporal vertices and $s_i$ goes from $s_{i-1}$ to $s_{i+1}$; and if $s_i$ and $s_{i+1}$ are temporal vertices, then $s_i = (v, t)$ and $s_{i+1} = (v, t+1)$ for some vertex $v$ and some time $t$. 
If such a walk exists, we say that $s_1$ \textit{reaches} $s_k$.
%If the subsequence $(s_i,\cdots,s_j)$ contains only copies of the same vertex $v\in V(G)$, we say that \emph{$v$ is waiting} in the interval $[t_i,t_i+j-i]$, where $s_i = (v,t_i)$.

A temporal digraph ${\cal B}=(G',\gamma',\lambda')$ such that $G'\subseteq G$, $\gamma'\subseteq\gamma$ and $\lambda'\subseteq\lambda$ is called a \emph{temporal subdigraph of ${\cal G}$}.\footnote{Here, a function is seen as a set of ordered pairs, and the containment relation is the usual one for sets.} Let $R\subseteq V_T$; a temporal subdigraph ${\cal B}$ of ${\cal G}$ is a \emph{temporal-spanning branching} of ${\cal G}$ with root $R$ if ${\cal B}$ has a unique temporal walk from $R$ to every vertex in $V_T$, i.e. for any $(u,i)\in V_T$ there is exactly one temporal walk in ${\cal B}$ starting at some vertex $r\in R$ and arriving at $(u,i)$.
And ${\cal B}$ is a \emph{vertex-spanning branching} of ${\cal G}$ with root $R$ if ${\cal B}$ has exactly one temporal walk from $R$ to some vertex in $\{(u,i)\in V_T\}$ for every $u\in V(G)$.

Given two branchings ${\cal B}_1 = (G_1,\gamma_1,\lambda_1)$ and ${\cal B}_2 = (G_2,\gamma_2,\lambda_2)$ rooted at $R_1,R_2$, respectively, either both temporal-spanning or both vertex-spanning, we say that ${\cal B}_1$ and ${\cal B}_2$ are \emph{temporal-edge-disjoint} (or t-edge-disjoint for short) if they have no common temporal edges; more formally, if $\lambda_1(e)\cap \lambda_2(e) = \emptyset$ for every $e\in E(G)$.
And we say that ${\cal B}_1$ and ${\cal B}_2$ are \emph{edge-disjoint} if there is no edge $uv\in E(G)$ that has copies in both ${\cal B}_1$ and ${\cal B}_2$; more formally, $E(G_1)\cap E(G_2) = \emptyset$.
%whenever $(u,i)(v,j)\in B_1$ for some $uv\in E(G)$, we get that $(u,i')(v,j')\notin B_2$ for every $i',j'$.
%

\begin{problem}[\textit{\probgen}]
Let $X\in\{\textrm{edge}, \textrm{t-edge}\}$, $Y\in\{\textrm{temporal}, \textrm{vertex}\}$, and $k$ be a fixed positive integer. Given a temporal digraph ${\cal G}$, and subsets of temporal vertices $R_1,\ldots,R_k\subseteq V_T$, find $k$ $X$-disjoint $Y$-spanning branchings ${\cal B}_1,\ldots, {\cal B}_k$ respectively with roots $R_1,\ldots,R_k$.
\label{prob:main}
\end{problem}

We introduce the following restriction of Problem~\ref{prob:main}, which corresponds to finding branchings that have a single root (also called out-arborescence).
%is a restricted version of our problem, and therefore could be an easier problem in general. In this section, we prove that this is not the case.

\begin{problem}[\textit{\probsinglesource}]
\sloppy
Let $X\in\{\textrm{edge}, \textrm{t-edge}\}$, $Y\in\{\textrm{temporal}, \textrm{vertex}\}$, and $k$ be a fixed positive integer. Given a temporal digraph ${\cal G}$, and a temporal vertex $r\in V_T$, find $k$ $X$-disjoint $Y$-spanning branchings ${\cal B}_1,\ldots, {\cal B}_k$ each one with root $r$.
\label{prob:mainss}
\end{problem}

%Clearly, finding branchings that have a single root each (which can be seen as an out-arborescence) is a restricted version of our problem, and therefore could be an easier problem in general. In this section, we prove that this is not the case.
%Observe that Problem~\ref{prob:mainss} is a special case of Problem~\ref{prob:main}, as it suffices to make $R_1=\cdots=R_k=\{r\}$. 
%In the following, we show that Problem~\ref{prob:main} and Problem~\ref{prob:mainss} are computationally equivalent.

\begin{lemma}
Problem~\ref{prob:main} is computationally equivalent to Problem~\ref{prob:mainss}.
\label{lem:ssred}
\end{lemma}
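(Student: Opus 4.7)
The direction from Problem~\ref{prob:mainss} to Problem~\ref{prob:main} is immediate: setting $R_1 = \cdots = R_k = \{r\}$ in Problem~\ref{prob:main} yields an instance whose solutions coincide with those of the given single-source instance. The substance of the lemma therefore lies in the opposite direction.

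Given $({\cal G}=(G,\gamma,\lambda), R_1, \ldots, R_k)$, I will build an equivalent instance $({\cal G}', r^{*})$ of Problem~\ref{prob:mainss} by shifting every timestamp of ${\cal G}$ up by one, adding a fresh vertex $r$ with $\gamma'(r) = \{1\}$, and, for every pair $(i, (u, t))$ with $(u, t) \in R_i$, introducing a new parallel arc $e_{i,u,t}$ from $r$ to $u$ in $G'$ with $\lambda'(e_{i,u,t}) = \{(1, t+1)\}$. The designated root is $r^{*} = (r, 1)$. In the forward direction, from a Problem~\ref{prob:main} solution $({\cal B}_1, \ldots, {\cal B}_k)$, each ${\cal B}_i$ is promoted to ${\cal B}_i'$ by appending $r$ and the arcs $\{e_{i,u,t} : (u,t) \in R_i\}$: each ${\cal B}_i'$ has a unique temporal walk from $(r, 1)$ to every temporal vertex (through $e_{i,u,t}$ for $(u,t) \in R_i$, and by prolonging such a walk inside ${\cal B}_i$ otherwise), and since the added arcs lie in disjoint parallel copies indexed by $i$, $X$-disjointness is preserved.

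The reverse direction is the crux. Starting from $({\cal B}_1', \ldots, {\cal B}_k')$, deleting $r$ from each ${\cal B}_j'$ yields a $Y$-spanning subdigraph of the shifted copy of ${\cal G}$ whose root set $S_j \subseteq \bigcup_i R_i$ consists of the out-neighbours of $r$ in ${\cal B}_j'$. The main obstacle I foresee is that $S_j$ need not coincide with any single $R_i$: a branching could enter the original graph through arcs of several labels simultaneously. I plan to overcome this by enriching the construction with, for each $i \in [k]$, a small label-identifier gadget --- for instance, a vertex $\tau_i$ reachable from $r$ only along suitably duplicated label-$i$ arcs --- calibrated so that the combination of $X$-disjointness and $Y$-spanning forces every ${\cal B}_j'$ to commit to a single label, yielding a bijection $j \mapsto i(j)$ with $S_j = R_{i(j)}$; after relabeling, the ${\cal B}_j'$'s project onto a Problem~\ref{prob:main} solution. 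Verifying that the same gadget works uniformly across all four variants $(X, Y) \in \{\textrm{edge}, \textrm{t-edge}\} \times \{\textrm{temporal}, \textrm{vertex}\}$ --- in particular, without inadvertently creating or destroying solutions in the temporal-spanning case, where every temporal copy of $\tau_i$ must be reachable by every branching --- is where the bulk of the remaining work concentrates.
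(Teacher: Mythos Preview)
Your plan isolates the right obstacle: a single-source branching might enter the original graph through a mixture of the $R_i$. The paper handles the reduction more simply than you do --- instead of labeled parallel arcs out of one new vertex, it inserts intermediate vertices $r_1,\ldots,r_k$, with an arc $r\to r_i$ for each $i$ and arcs $r_i\to u$ (at the appropriate times) for each $(u,t)\in R_i$, and then asserts in one sentence that solutions correspond. Root sets are separated by \emph{vertex} rather than by arc label, so no ``label-identifier gadget'' is contemplated at all.

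Your proposed gadget, as you describe it, would not work: a vertex $\tau_i$ reachable from $r$ only along label-$i$ arcs must be spanned by \emph{every} branching, so it forces every branching to use every label --- the opposite of committing to one. If you pursue your route you will need a counting/pigeonhole argument instead, arguing that $X$-disjointness on the arcs leaving the gadget layer into the original graph, together with the need for each branching to reach the original vertices at all, pins down the entry points. You should also notice that even the paper's construction, read literally (one arc $r r_i$ with a single temporal copy), leaves each $(r_i,0)$ with a unique incoming arc, so $k\ge 2$ disjoint single-source branchings cannot all span it; the paper's one-line justification glosses over exactly the difficulty you identified. Whatever construction you finalize, make sure the new vertices themselves can be spanned by all $k$ branchings (e.g.\ by adding $k$ parallel copies of the arcs into the gadget layer), and verify both directions explicitly across all four $(X,Y)$ combinations rather than deferring that to ``remaining work''.
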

\begin{proof}
Problem~\ref{prob:mainss} is clearly a restriction of  Problem~\ref{prob:main}. In the following we provide the reduction in the opposite direction, from the problem where each branching has a subset of $V_T$ as roots to the problem where each branching has a single same root. 
For this, for each $i\in [k]$ add a new vertex $r_i$ to $G$ adjacent to every $u\in V(G)$ such that $(u,t)\in R_i$, for some $t\in [\T]$. Then, make $\gamma(r_i)=\{0\}$, and for each $(u,t)\in R_i$, add $(0,t)$ to $\lambda(r_iu)$ (which is the same as adding the temporal edge $(r_i,0)(u,t)$ to ${\cal G}$). Moreover, add a vertex $r$ and make it adjacent to $\{r_1,\cdots,r_k\}$; also make $\gamma(r) = \{0\}$ and  $\lambda(rr_i) = \{(0,0)\}$ (which is the same as adding temporal edges $(r,0)(r_i,0)$ for every $i\in[k]$).

One can see that $k$ vertex-spanning (resp. temporal-spanning) branchings rooted at $r$ give $k$ vertex-spanning (resp. temporal-spanning) branchings rooted at $R_1,\cdots,R_k$, and vice-versa. The edge-disjointness, both for t-edge or edge-disjoint versions, clearly are not altered by adding the new temporal edges.
%Problem~\ref{prob:mainss} is clearly a restriction of  Problem~\ref{prob:main}. In the following we provide the reduction in the opposite direction, from the problem where each branching has a subset of $V_T$ as roots to the problem where each branching has a single same root. 
%For this, for each $i\in [k]$ add a new vertex $r_i$ to $G$ adjacent to every $u\in V(G)$ such that $(u,t)\in R_i$, for some $t\in [T]$. Then, make $\gamma(r_i)=0$, and for each $(u,t)\in R_i$, add $(0,t)$ to $\lambda(r_iu)$ (which is the same as adding the temporal edge $(r_i,0)(u,t)$ to ${\cal G}$). One can see that a vertex-spanning (resp. temporal-spanning) branching rooted at $r_i$ gives a vertex-spanning (resp. temporal-spanning) branching rooted at $R_i$, and vice-versa. The edge-disjointness, both for t-edge or edge-disjoint versions, clearly are not altered by adding the new temporal edges.
\end{proof}

The next easy proposition tells us that if finding $k$ disjoint spanning branchings is hard, for some fixed $k$, then so is finding $k+1$ of them.
\begin{proposition}\label{prop:ktok+1}
Let $X\in\{\textrm{temporal}, \textrm{vertex}\}$, $Y\in\{\textrm{edge}, \textrm{t-edge}\}$ and $k$ be a fixed positive integer. If Problem \probgen is $\NP$-complete, then the same holds for Problem \probgenarg{k+1}.
%Consider a fixed integer $k$, and let ${\cal G}$ be a temporal. If deciding whether ${\cal G}$ has $k$ (edge or t-edge) disjoint (temporal or vertex)-spanning branchings rooted at given $R_1,\cdots,R_k$ is $\NP$-hard, then the same holds when $k+1$ branchings are wanted.
\end{proposition}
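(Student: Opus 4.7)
I plan to prove the proposition by giving a polynomial reduction from \probgen to its $(k+1)$-variant, in which the extra branching $\mathcal{B}_{k+1}$ is always realized without using any edge of the base digraph, so that it is automatically disjoint from the other $k$ branchings under either notion of disjointness.

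The starting idea, echoing the commented note in the excerpt, is to set $R_{k+1}:=V_T(\mathcal{G})$ so that every target $(u,i)$ already sits in the root and becomes a candidate for the trivial zero-length walk. For the vertex-spanning cases this will work immediately: I would take $\mathcal{B}_{k+1}$ with vertex set $V(G)$, no edges, and $\gamma_{k+1}(v)=\{i_v\}$ for an arbitrary single $i_v\in\gamma(v)$ per base vertex; then each $v\in V(G)$ is reached by exactly the zero-length walk at $(v,i_v)$, yielding a valid vertex-spanning branching.

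The step I expect to be the main obstacle is the temporal-spanning case, where $\mathcal{B}_{k+1}$ must reach every $(v,i)\in V_T$, which forces $\gamma_{k+1}=\gamma$; then the implicit stay-walks $(v,i'),(v,i'+1),\ldots,(v,i)$ would create multiple walks from $R_{k+1}=V_T$ to $(v,i)$ whenever $\gamma(v)$ contains consecutive integers, breaking uniqueness. To handle this I would first preprocess $\mathcal{G}$ by a time-rescaling step, replacing every $t$ by $2t$ throughout $\gamma$, $\lambda$, and each $R_i$; this is a polynomial, solution-preserving bijection ensuring that no $\gamma(v)$ contains two consecutive integers, and hence that no stay-walks exist at all. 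In this rescaled instance, the edge-free $\mathcal{B}_{k+1}$ above has exactly the zero-length walk from $R_{k+1}$ to each temporal vertex, making it a valid temporal-spanning branching.

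Finally, the equivalence of the two instances will follow routinely: any solution of the $k$-instance extends to a solution of the $(k+1)$-instance by appending the edge-free $\mathcal{B}_{k+1}$, which is trivially disjoint from the others regardless of whether we require edge-disjointness or t-edge-disjointness; conversely, discarding $\mathcal{B}_{k+1}$ from any $(k+1)$-solution (and undoing the rescaling) yields a $k$-solution. Since the same construction works uniformly across all four choices of spanning and disjointness, this will establish the proposition.
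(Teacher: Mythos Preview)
Your core idea matches the paper exactly: the paper's entire proof is the one-liner ``add $R_{k+1}=V_T$; then the $(k{+}1)$-th branching has no temporal edges, so the remaining $k$ branchings solve the original instance.'' The paper does not perform any rescaling and simply treats the edge-free subdigraph on $V_T$ as a valid branching when every temporal vertex is a root.

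Your additional uniqueness worry for the temporal-spanning case is legitimate under a literal reading of the definition, but your proposed fix has a genuine gap: the time-doubling map $t\mapsto 2t$ is \emph{not} solution-preserving. It annihilates all stay-walks in $\mathcal{G}$, not just those in $\mathcal{B}_{k+1}$. Concretely, take $V(G)=\{r,v\}$, $E(G)=\{rv\}$, $\gamma(r)=\{1\}$, $\gamma(v)=\{1,2\}$, $\lambda(rv)=\{(1,1)\}$, and $R_1=\{(r,1)\}$. This is a yes-instance for $k=1$ temporal-spanning (reach $(v,2)$ by waiting at $v$). After rescaling, $\gamma'(v)=\{2,4\}$ and $(v,3)\notin V_T'$, so $(v,4)$ is unreachable and the instance becomes a no-instance. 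Hence your reduction does not carry yes-instances to yes-instances in general, and the argument fails for the temporal-spanning variants. If you want to keep the rescaling idea, you must reinstate the lost stays, e.g.\ by adding, for each $v$ and each consecutive pair $t,t{+}1\in\gamma(v)$, enough parallel loop edges at $v$ with time $(2t,2t{+}2)$ (one suffices for t-edge-disjointness; $k$ parallel copies are needed for edge-disjointness), and then argue that these new edges do not interfere with the equivalence.
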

\begin{proof}
To reduce from $k$ to $k+1$, it suffices to add $R_{k+1} = V_T$ as entry. Surely the $(k+1)$-th branching has no temporal edges, which means that the other ones form a solution to the initial problem.
\end{proof}

%%%%%%%%%%%%%%%%%%%%%%%%%%%%%%%%%%%%%%%%%%%%%%%%%%%%%%%%%%%%%%%%%%%%%%%%%%%%%%%%%%%%%%%%%%%%%%%%%%%%%%%%%%%%%%%%%%%%%%%%%%%%%%%%%%%%%%%%%%%%%%%%%%
%%%%%%%%%%%%%%%%%%%%%%%%%%%%%%%%%%%%%%%%%%%%%%%%%%%%%%%%%%%%%%%%%%%%%%%%%%%%%%%%%%%%%%%%%%%%%%%%%%%%%%%%%%%%%%%%%%%%%%%%%%%%%%%%%%%%%%%%%%%%%%%%%%
%%%%%%%%%%%%%%%%%%%%%%%%%%%%%%%%%%%%%%%%%%%%%%%%%%%%%%%%%%%%%%%%%%%%%%%%%%%%%%%%%%%%%%%%%%%%%%%%%%%%%%%%%%%%%%%%%%%%%%%%%%%%%%%%%%%%%%%%%%%%%%%%%%

\section{Temporal-Spanning Branchings}\label{sec:temporalSpanning}

This section is devoted to study Problem~\ref{prob:main} in the case where $Y$ is temporal, i.e. we aim to find $k$ $X$-disjoint temporal-spanning branchings, with $X\in\{\textrm{edge},\textrm{t-edge}\}$. We will hence prove Item~\ref{item:one} and Item~\ref{item:two} of Theorem~\ref{thm:main} respectively in Section~\ref{sec:polynomial} and in Section~\ref{sec:tempspanhard}.

\subsection{T-edge-disjoint Temporal-Spanning Branchings}
\label{sec:polynomial}

%In the following, we prove that finding $k$ t-edge-disjoint temporal-spanning branchings, i.e. Problem~\ref{prob:main} with $X=\textrm{t-edge}$ and $Y=\textrm{temporal}$, is polynomial,  proving Item~\ref{item:one} of Theorem~\ref{thm:main}.

Let ${\cal G} = (G,\gamma,\lambda)$, and let $V_T,E_T$ be its set of temporal vertices and edges, respectively. Also, let $R_1,\cdots,R_k\subseteq V_T$, and $H = (V_T,E_T\cup E')$, where $E'$ contains $k$ copies of the edge $(u,t)(u,t+1)$ whenever $\{(u,t),(u,t+1)\}\subseteq V_T$. %In this section, we construct an auxiliary digraph $H$ with vertex set $V_T$ such 
We prove that ${\cal G}$ has the desired branchings iff $H$ has $k$ edge-disjoint spanning branchings with roots $R_1,\cdots,R_k$. Then, Item~\ref{item:one} of Theorem~\ref{thm:main} follows by Edmonds' result~\cite{edmonds1973edge}.

\begin{lemma}
Let ${\cal G} = (G,\gamma,\lambda)$ be a temporal digraph, $R_1,\cdots,R_k\subseteq V_T$, and $H$ be constructed as above. 
Then, ${\cal G}$ has $k$ t-edge-disjoint temporal-spanning branchings rooted at $R_1,\cdots,R_k$ iff $H$ has $k$ edge-disjoint spanning branchings rooted at $R_1,\cdots,R_k$.
\end{lemma}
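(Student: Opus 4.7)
The plan is to exploit the natural bijection between temporal walks in $\mathcal{G}$ and directed walks in $H$: a temporal walk alternates between following temporal edges and staying put on the same vertex while time advances by one, and these two moves correspond exactly to traversing an edge of $E_T$ and traversing a waiting edge of $E'$, respectively. Under this correspondence, a temporal subdigraph that contains a unique $R_i$-to-$(u,t)$ temporal walk for every $(u,t)\in V_T$ corresponds to a spanning branching of $H$ with root $R_i$, and conversely.

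For the forward direction, suppose $\mathcal{B}_1,\ldots,\mathcal{B}_k$ are $k$ t-edge-disjoint temporal-spanning branchings. For each $i$, I would build $B_i\subseteq H$ by taking (i) every temporal edge appearing in $\mathcal{B}_i$ as the corresponding edge of $E_T$, and (ii) for every waiting step $(v,t)\to(v,t+1)$ that is used by the unique $R_i$-to-$(u,t')$ walk of $\mathcal{B}_i$ for some $(u,t')$, one fresh copy of the waiting edge $(v,t)(v,t+1)$ from $E'$. Because $E'$ contains $k$ copies of each waiting edge, I can assign the copies so that no two $B_i$'s share an $E'$-edge; and no two $B_i$'s share an $E_T$-edge because $\mathcal{B}_1,\ldots,\mathcal{B}_k$ are t-edge-disjoint. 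The uniqueness of the temporal walks translates verbatim into uniqueness of the $R_i$-to-$(u,t)$ walks in $B_i$, so each $B_i$ is a spanning branching of $H$ rooted at $R_i$.

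For the converse direction, suppose $B_1,\ldots,B_k$ are edge-disjoint spanning branchings of $H$ with roots $R_1,\ldots,R_k$. I would define $\mathcal{B}_i$ as the temporal subdigraph of $\mathcal{G}$ whose temporal edges are exactly the $E_T$-edges used by $B_i$, with the induced restrictions of $\gamma$ and $\lambda$ (keeping a time $t\in\gamma'(u)$ precisely when $(u,t)$ appears on the unique $R_i$-to-$(u,t)$ walk in $B_i$; note $(u,t)$ is reached by $B_i$ since $B_i$ is spanning). Reading the unique $R_i$-to-$(u,t)$ walk in $B_i$ and interpreting each $E'$-step as a waiting step produces a temporal walk in $\mathcal{B}_i$; uniqueness in $B_i$ forces uniqueness of this temporal walk (alternative temporal walks would lift back to alternative walks in $B_i$), so $\mathcal{B}_i$ is a temporal-spanning branching. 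Edge-disjointness on the $E_T$-edges across the $B_i$'s gives t-edge-disjointness of $\mathcal{B}_1,\ldots,\mathcal{B}_k$.

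The only potentially delicate step is bookkeeping in the forward direction: I need to make sure that the waiting edges I pick up from $\mathcal{B}_i$'s walks form, together with the selected temporal edges, a genuine branching in $H$ (no repeated or redundant waiting edges, and the right multiplicity for the $E'$-assignment across the $k$ branchings). Both points are easy once I observe that along the unique $R_i$-to-$(u,t)$ walks the waiting edge $(v,s)(v,s+1)$ is used at most once in $\mathcal{B}_i$ (otherwise the temporal walks witnessing its two uses would coincide on $(v,s+1)$, contradicting uniqueness), so at most one copy per $i$ is needed, and the $k$ copies in $E'$ suffice to give each $i$ its own. Once the correspondence is set up cleanly, the equivalence falls out and Item~\ref{item:one} of Theorem~\ref{thm:main} follows by applying Edmonds' theorem to $H$.
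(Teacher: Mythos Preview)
Your proposal is correct and follows essentially the same approach as the paper: both directions hinge on the natural correspondence between temporal walks in $\mathcal{G}$ and directed walks in $H$, with the $k$ parallel copies in $E'$ absorbing the waiting steps of the $k$ branchings. The paper's write-up is terser (in the converse it simply says to delete the $E'$-edges), but your more explicit bookkeeping---in particular the observation that each $\mathcal{B}_i$ uses any given waiting edge at most once, so the $k$ copies suffice---is exactly the point the paper is invoking when it remarks that the unique walk ``cannot pass twice from timestamp $t$ to timestamp $t+1$''.
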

\begin{proof}
Let ${\cal B}_1,\cdots,{\cal B}_k$ be t-edge-disjoint temporal-spanning branchings rooted at $R_1,\cdots,R_k$, respectively. For each ${\cal B}_i$, let $B_i$ be a spanning subgraph of $H$ initially containing the temporal edges of ${\cal B}_i$; then for each $(u,t)\in V(B_i)$, if the only walk in ${\cal B}_i$ from $R_i$ to $(u,t)$ contains $(u,t)(u,t+1)$ as a subsequence, then add an unused copy of $(u,t)(u,t+1)\in $ to $B_i$. Because this walk is unique and cannot pass twice from timestamp $t$ to timestamp $t+1$, we get that at most $k$ copies are needed, and, hence, the produced branchings are edge-disjoint. The converse can be easily proved by deleting the edges in $E'$ from the solution to obtain the temporal subgraphs.
\end{proof}

%From the above lemma and by Edmond's result~\cite{edmonds1973edge}, we obtain the proof of Item~\ref{item:one} in Theorem~\ref{thm:main}.

%\includegraphics[scale=0.1,angle=270]{photo/gamma-path.jpg}

%\includegraphics[scale=0.1]{photo/lemma.jpg}

%%%%%%%%%%%%%%%%%%%%%%%%%%%%%%%%%%%%%%%%%%%%%%%%%%%%%%%%%%%%%%%%%%%%%%%%%%%%%%%%%%%%%%%%%%%%%%%%%%%%%%%%%%%%%%%%%%%%%%%%%%%%%%%%%%%%%%%%%%%%%%%%%%
%%%%%%%%%%%%%%%%%%%%%%%%%%%%%%%%%%%%%%%%%%%%%%%%%%%%%%%%%%%%%%%%%%%%%%%%%%%%%%%%%%%%%%%%%%%%%%%%%%%%%%%%%%%%%%%%%%%%%%%%%%%%%%%%%%%%%%%%%%%%%%%%%%
%%%%%%%%%%%%%%%%%%%%%%%%%%%%%%%%%%%%%%%%%%%%%%%%%%%%%%%%%%%%%%%%%%%%%%%%%%%%%%%%%%%%%%%%%%%%%%%%%%%%%%%%%%%%%%%%%%%%%%%%%%%%%%%%%%%%%%%%%%%%%%%%%%
\subsection{Edge-disjoint Temporal-Spanning Branchings}
\label{sec:tempspanhard}

In this section, we prove Item~\ref{item:two} of Theorem~\ref{thm:main}. For this, we first prove that the problem is \NP-complete, and then that it is polynomial when each vertex is active for a consecutive set of timestamps. This includes the popular case where vertices are assumed to be eternal, as well as the case where $T=2$. 

Theorem~\ref{thm:edgeDisjointTemporal} and Theorem~\ref{thm:theother} below detail our $\NP$-completeness results.% Because of space limits, we only present the proof of Theorem~\ref{thm:edgeDisjointTemporal}, while the proof of Theorem~\ref{thm:theother} is given in Appendix~\ref{app:theother}.

\begin{theorem}\label{thm:edgeDisjointTemporal}
Let $k\ge 2$ be a fixed integer, ${\cal G} = (G,\gamma,\lambda)$ be a temporal digraph, and $R_1,\ldots,R_k\subseteq V_T$. Deciding whether ${\cal G}$ has $k$ edge-disjoint temporal-spanning branchings rooted at $R_1,\cdots,R_k$ is $\NP$-complete even if ${\cal G}$ has lifetime~3.
\end{theorem}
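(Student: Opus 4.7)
The plan is to reduce from \NAESAT{} to the case $k=2$ and then invoke Proposition~\ref{prop:ktok+1} to lift the hardness to every fixed $k\ge 2$. Membership in $\NP$ is immediate: the pair $({\cal B}_1,{\cal B}_2)$ itself is a polynomial-size witness, and checking temporal-spanning and edge-disjointness is straightforward.

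For the reduction I would exploit two features of the problem. First, edge-disjointness is measured at the level of the base digraph $G$: if $uv\in E(G)$, then \emph{all} temporal copies of $uv$ must be used by the same branching, if used at all. Second, once the lifetime is fixed at $3$, every copy $(v,t)$ of a vertex with $\gamma(v)=\{1,2,3\}$ must be reached, in both branchings, by some unique temporal walk, so the geometry of time is extremely rigid. Given a \NAESAT{} formula $\phi$ on variables $x_1,\dots,x_n$ and clauses $C_1,\dots,C_m$, I build ${\cal G}$ with a single root $r$ (using Lemma~\ref{lem:ssred}) and three "layers" of time. For each variable $x_i$ I put a small variable-gadget with two parallel base edges $e_i^T$ and $e_i^F$ connecting a common pair of vertices; each of these edges is the \emph{only} route to some mandatory temporal vertex, and therefore edge-disjointness forces $\{e_i^T,e_i^F\}$ to be split between ${\cal B}_1$ and ${\cal B}_2$, one apiece. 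This split encodes the Boolean value assigned to~$x_i$: ${\cal B}_1$ uses the "true" copy and ${\cal B}_2$ uses the "false" copy (or vice versa, which is fine because \NAESAT{} is invariant under global complement).

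For each clause $C_j=\ell_{j,1}\vee\ell_{j,2}\vee\ell_{j,3}$ I place a clause-gadget at the final timestamp: three temporal vertices $(c_{j,1},3),(c_{j,2},3),(c_{j,3},3)$, each reachable \emph{only} via the literal-edges arriving from the corresponding variable gadget at time $2$. The gadget is designed so that both $(c_j^+,3)$ and $(c_j^-,3)$ (two new mandatory temporal vertices) must be reached, one by each branching, and the only temporal walks that arrive at them pass through an edge corresponding to a \emph{true} literal (respectively, \emph{false} literal) of $C_j$ under the assignment encoded by the variable gadget; this is precisely the NAE constraint. The three timestamps are used as: propagating the root at time~$1$, executing the variable choice at time~$2$, and verifying clauses at time~$3$. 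The forward direction (an NAE-satisfying assignment yields two edge-disjoint temporal-spanning branchings) is then a direct construction, and the backward direction recovers the assignment from which of $e_i^T$ or $e_i^F$ is in ${\cal B}_1$ and checks that each clause gadget is reachable in both branchings.

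The main obstacle I expect is engineering the clause-gadget so that it truly captures NAE and nothing weaker, \emph{while} respecting (i) the lifetime-$3$ ceiling, (ii) the \emph{unique} temporal-walk condition required for a branching, and (iii) the fact that disjointness is static. The natural pitfall is that a temporal walk might sneak into the clause gadget through an unintended vertex-waiting step $(v,t)\to(v,t+1)$, thus bypassing the literal edges; my construction therefore has to keep $\gamma(\cdot)$ on the clause-side vertices tightly bounded (e.g., active only at the single timestamp~$3$) so that waiting is impossible and every route to $(c_j^\pm,3)$ is forced through a literal edge chosen at time~$2$. Once the gadget is laid out, the correctness proof of the reduction is a routine case analysis, and Proposition~\ref{prop:ktok+1} finishes the theorem for arbitrary fixed $k\ge 2$.
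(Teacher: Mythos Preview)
Your route is genuinely different from the paper's: the paper reduces from $2$-\textsc{Weak Disjoint Paths}, not from \NAESAT. In that reduction the three snapshots are $G-\{s_2,t_2\}$ (plus a universal vertex $x$), an empty snapshot, and $G-\{s_1,t_1\}$ (plus a universal vertex $y$); the roots are $R_1=\{(s_1,1),(y,3)\}$ and $R_2=\{(s_2,3),(x,1)\}$. The universal vertices $x$ and $y$ let each branching span the ``other'' snapshot for free, so the only nontrivial constraint is that $T_1$ must contain an $s_1,t_1$-path in snapshot~1 and $T_2$ an $s_2,t_2$-path in snapshot~3, and edge-disjointness of the branchings collapses exactly to edge-disjointness of these two paths. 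The paper does use \NAESAT, but only for the unbounded-lifetime star case (Theorem~\ref{thm:theother}), where each variable and each clause gets its own private timestamp.

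Your proposal, as written, has a real gap at the variable gadget. You describe $e_i^T,e_i^F$ as \emph{parallel base edges connecting a common pair of vertices}. With identical endpoints, the split you force is purely nominal: nothing downstream can distinguish ``${\cal B}_1$ took $e_i^T$'' from ``${\cal B}_1$ took $e_i^F$'', so the clause gadget cannot read the assignment. If instead you route $e_i^T,e_i^F$ to distinct targets $T_i,F_i$ (so the split becomes readable), you run into the defining feature of \emph{temporal}-spanning: both ${\cal B}_1$ and ${\cal B}_2$ must reach \emph{every} temporal copy of $T_i$ and of $F_i$. That forces at least two distinct base in-edges at each of $T_i,F_i$, and the extra ``backup'' edges you must add typically let a branching reach $T_i$ without committing to $e_i^T$, which dissolves the intended forcing. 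This is exactly why the paper's own \NAESAT{} reduction spends one timestamp per variable and per clause; compressing it to lifetime~$3$ is precisely the difficulty your sketch does not address. The $2$-\textsc{WDP} reduction sidesteps the whole issue because the universal vertices $x,y$ provide all the auxiliary reachability in one shot, without interfering with the two paths that carry the hardness.

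A minor point: invoking Lemma~\ref{lem:ssred} to pass to a single root is unnecessary here (the statement already allows arbitrary $R_1,\dots,R_k$), and if you do use it, note that it introduces timestamp~$0$; this is harmless for the paper's definition of lifetime as $\max\bigcup_v\gamma(v)$, but it is worth being explicit.
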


We make a reduction from the $k$-\textsc{Weak Disjoint Paths} problem ($k$-\textsc{WDP}), where the input is a digraph $G$ and a set $I$ of $k$ pairs of vertices $\{(s_1, t_1), \ldots, (s_k, t_k)\}$ (called the \emph{requests}) of $V(G)$ and the goal is to find a collection of pairwise edge-disjoint paths $\{P_1, \ldots, P_k\}$ such that $P_i$ is a path from $s_i$ to $t_i$ in $G$, for $i \in \{1, \ldots, k\}$. 
This problem is $\NP$-complete for $k=2$~\cite{FORTUNE1980111} and \textsf{W}[1]-hard with parameter $k$ in DAGs~\cite{Slivkins.03}. %In the following, given a vertex $v$, we denote as $\deg^-(v) = \deg^+(v)$
We remark that we can always assume that $s_1,s_2$ are sources, $t_1,t_2$ are sinks, and all vertices in $\{s_1,s_2,t_1,t_2\}$ are distinct. 
If this is not the case for some $s_i$, for example, it suffices to add a new vertex $s'_i$ with an edge to $s_i$ and define $I' = (I - (s_i, t_i)) \cup (s'_i, t_i)$ as the new set of requests for the instance.

\begin{proof}[Proof of Theorem~\ref{thm:edgeDisjointTemporal}]
Let $(G, I)$ be an instance of $2$-\textsc{WDP} with $I = \{(s_1, t_1), (s_2, t_2)\}$, and define $W = \{s_1, t_1, s_2, t_2\}$.
Assume that $s_1,s_2$ are sources, $t_1,t_2$ are sinks, and all vertices in $W$ are distinct. We construct the temporal graph $\mathcal{G} = (G, \gamma, \lambda)$ with subsets $R_1,R_2$ such that ${\cal G}$ has $2$ edge-disjoint temporal-spanning branchings rooted at $R_1,R_2$ if and only if $(G,I)$ is a ``yes'' instance of $2$-\textsc{WDP}. The $\NP$-completeness for higher values of $k$ follows from Proposition~\ref{prop:ktok+1}.

In the constructed temporal graph, there are no temporal edges of the type $(u,t)(v,t')$ with $t\neq t'$. For this reason, it is easier to describe our temporal graph by describing, for each timestamp, what are the vertices and edges that are active. These are called \emph{snapshots} and consist of subgraphs of $G$ formed at each  timestamp. 

We let the first snapshot of ${\cal G}$ initially consist of $G-\{s_2,t_2\}$, and the third snapshot initially consist of $G-\{s_1,t_1\}$. Then, we add a new vertex $x$ to snapshot~1, and add the edges: $\{xv\mid v\in V(G)\setminus \{s_2,t_2\}\}\cup \{t_1v\mid v\in (V(G)\cup \{x\})\setminus \{s_1,s_2,t_2\}\}$. Similarly, we add a new vertex $y$ to snapshot~3, and add the edges: $\{yv\mid v\in V(G)\setminus \{s_1,t_1\}\}\cup \{t_2v\mid v\in (V(G)\cup \{y\})\setminus \{s_2,s_1,t_1\}\}$. Observe Figure~\ref{fig:2WDP}.

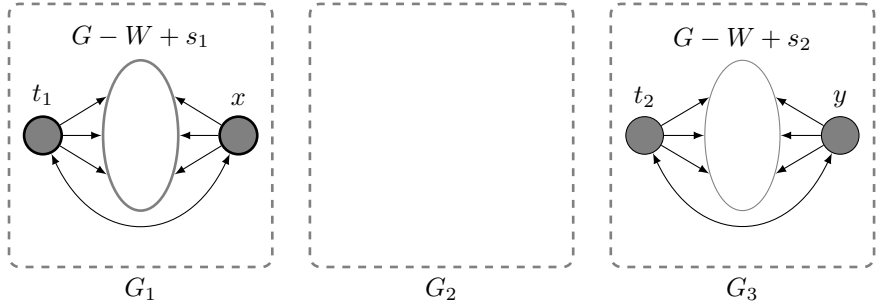
\begin{figure}[ht]
\begin{center}
  \begin{tikzpicture}[scale=1]
  \pgfsetlinewidth{1pt}
  \pgfdeclarelayer{bg}    % declare background layer
   \pgfsetlayers{bg,main}  % set the order of the layers (main is the standard layer)

  \tikzset{vertex/.style={circle, minimum size=0.5cm, draw, inner sep=1pt, fill=black!50}}
  \tikzset{snapshot/.style ={draw=black!50, rounded corners, dashed, minimum height=8mm, minimum width=2.3cm} }
  \tikzset{subgraph/.style ={draw=black!50, circle, draw, minimum width=1cm, yscale=2, fill=white} }

    %draw the snapshots boxes
    \node [label=-90:$G_1$, draw=black!50, rounded corners, dashed, minimum height=3.5cm, minimum width=3.5cm] at (0,0){};
    \node [label=-90:$G_2$, draw=black!50, rounded corners, dashed, minimum height=3.5cm, minimum width=3.5cm] at (4,0){};
    \node [label=-90:$G_3$, draw=black!50, rounded corners, dashed, minimum height=3.5cm, minimum width=3.5cm] at (8,0){};

    %draws graph inside snapshot 1    
    \node [subgraph, label=90:$G-W+{s_1}$] (G1) at (0,0) {};
    \node[vertex, label=90:$t_1$] (t1) at (-1.3,0) {};
    \node[vertex, label=90:$x$] (x) at (1.3,0) {};
    %for control points in edge from x to t1
    \coordinate (c1) at ($ (t1) + (0.7,-1.5) $);
    \coordinate (c2) at ($ (x) + (-0.7,-1.5) $);
    
    \begin{pgfonlayer}{bg}    % select the background layer
        \path[->,>=latex] (t1) edge (G1) (t1) edge(G1.-150) (t1) edge (G1.150);
        \path[->,>=latex] (x) edge (G1) (x) edge (G1.-30) (x) edge (G1.30);
        %\path [->,>=latex] (x) edge [out=-90,in=-90] (t1);
        \draw[<->,>=latex]  (t1) .. controls (c1) and (c2) .. (x);
    \end{pgfonlayer}

    %draws the same graph shifted to fit inside snapshot 3
    \node [subgraph, label=90:$G-W+{s_2}$, shift={(8,0)}] (G1) at (0,0) {};
    \node[vertex, label=90:$t_2$, shift={(8,0)}] (t1) at (-1.3,0) {};
    \node[vertex, label=90:$y$, shift={(8,0)}] (x) at (1.3,0) {};
    %for control points in edge from x to t1
    \coordinate (c1) at ($ (t1) + (0.7,-1.5) $);
    \coordinate (c2) at ($ (x) + (-0.7,-1.5) $);

    \begin{pgfonlayer}{bg}    % select the background layer
        \path[->,>=latex, shift={(8,0)}] (t1) edge (G1) (t1) edge (G1.-150) (t1) edge (G1.150);
        \path[->,>=latex, shift={(8,0)}] (x) edge (G1) (x) edge (G1.-30) (x) edge (G1.30);
        %\path [->,>=latex] (x) edge [out=-90,in=-90] (t1);
        \draw[<->,>=latex]  (t1) .. controls (c1) and (c2) .. (x);
    \end{pgfonlayer}

  \end{tikzpicture}
\label{fig:kirisG7}
\end{center}
\caption{Temporal graph constructed from an instance $(G,I)$ of $2$-\textsc{WDP}, where $I=\{(s_1,t_1),(s_2,t_2)\}$ and $W=\{s_1,t_1,s_2,t_2\}$. Edges arriving in $t_1$ and $t_2$ originally from $G$ are omitted.}
\label{fig:2WDP}
\end{figure}

Define $R_1 = \{(s_1,1), (y,3)\}$ and $R_2 = \{(s_2, 3), (x,1)\}$. Now, we prove that $(G,I)$ is a ``yes'' instance of $2$-\textsc{WDP} if and only if $\mathcal{G}$ contains two edge-disjoint temporal-spanning branchings rooted at $R_1$ and $R_2$, respectively. 
Notice that snapshot $2$ of $\mathcal{G}$ is empty, thus each path in $G$ can be represented by either a temporal path on snapshot~$1$ or a temporal path on snapshot~$2$.% in $\mathcal{G}$.

First, let $P_1$ and $P_2$ be two edge-disjoint paths from $s_1$ to $t_1$ and from $s_2$ to $t_2$ in $G$, respectively. %, with $V(P_1) = \{s_1, a^1, a^2, \ldots, a^{r}, t_1\}$ and $V(P_2) = \{s_2, b^1, b^2, \ldots, b^{q}, t_2\}$.
Let $T_1$ be initially the copy of $P_1$ in snapshot~1, and $T_2$ be initially the copy of $P_2$ in snapshot~3. 
Note that the vertices not spanned by $T_1$ are all the copies of $v\notin V(P_1)$ in snapshot~1, together with all the vertices in snapshot~3, and vertices $\{(x,1),(y,3)\}$. To span snapshot~3, add to $T_1$ all edges between $(y,3)$ and $(v,3)$, for every $v\in V(G)\setminus \{s_1,t_1\}$.
To span the remainder of snapshot~1, add all edges between $(t_1,1)$ and $(v,1)$, for every $v\in V(G)\setminus (V(P_1)\cup \{s_2,t_2\})$, and the edge from $(t_1,1)$ to $(x,1)$. A similar argument can be applied to span every temporal vertex  also with $T_2$. Because $P_1$ and $P_2$ are edge-disjoint, we get that $T_1$ and $T_2$ could only intersect in the added edges, which does not occur because all edges added to $T_1$ are incident to $t_1$ and $y$, all edges added to $T_2$ are incident to $t_2$ and $x$, and there is no intersection between these.

Now, let $T_1$ and $T_2$ be edge-disjoint temporal-spanning branchings in $\mathcal{G}$ with roots $R_1,R_2$.
Denote snapshot~1 by $G_1$. Since $t_1$ appears only in $G_1$, and the only root of $R_1$ in $G_1$ is $(s_1,1)$, we get that in $T_1$ there exists a path of $G_1$ going from $(s_1,1)$ to $(t_1,1)$. Because the only incoming edge to $(x,1)$ is $(t_1,1)(x,1)$, we get that $(x,1)$ cannot be an internal vertex in this path, and hence it corresponds to a path in $G$, $P_1$. Applying a similar argument, we get a path $P_2$ from $s_2$ to $t_2$ in $G$ taken from $T_2$, and since $T_1$ and $T_2$ are edge-disjoint, so are $P_1$ and $P_2$.
\end{proof}

%To conclude the proof of Item~\ref{item:two} of Theorem~\ref{thm:main}, the following result is proved in Appendix~\ref{app:theother}.
We conclude the proof of Item~\ref{item:two} of Theorem~\ref{thm:main} by showing the result.

\begin{theorem}\label{thm:theother}
Let $k\ge 2$ be a fixed integer, ${\cal G} = (G,\gamma,\lambda)$ be a temporal digraph, and $R_1,\ldots,R_k\subseteq V_T$. Deciding whether ${\cal G}$ has $k$ edge-disjoint temporal-spanning branchings rooted at $R_1,\cdots,R_k$ is $\NP$-complete, even if $G$ is a DAG, each snapshot has constant size, and the underlying graph of $G$ is a star. Furthermore, in this case, there is no algorithm running in time $O^*(2^{o(\T)})$ to solve the problem, unless ETH fails.
\end{theorem}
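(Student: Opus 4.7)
The plan is to reduce from 3-SAT, invoking the sparsification lemma so that the source instance has $m = O(n)$ clauses and ETH gives no $2^{o(n)}$-time algorithm. The reduction will produce a temporal digraph $\mathcal{G}$ whose underlying graph is a star with center $c$, whose lifetime $\T$ is $O(n)$, and in which every snapshot contains only constantly many temporal edges. Because the reduction is polynomial and blows up the parameter linearly, both \NP-completeness and the $O^*(2^{o(\T)})$ lower bound will follow.

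The key structural observation driving the construction is that on a star, with $k=2$ branchings and the edge-disjoint requirement, each edge $cv$ of $G$ is either used by $\mathcal{B}_1$ or used by $\mathcal{B}_2$, but never both (it is the unique bridge between $c$ and the leaf $v$). Consequently, the edge set of the star is partitioned $E(G) = E_1 \uplus E_2$; the branching $\mathcal{B}_i$ with $cv \in E_i$ reaches $v$ by selecting a single appearance of $cv$ and then possibly waiting at $v$, while the other branching $\mathcal{B}_{3-i}$ has no arc available to $v$ and must therefore cover every appearance of $v$ directly from its root $R_{3-i}$. Orienting the star from $c$ to the leaves (so that $G$ is a DAG) and setting $\gamma$ so that each variable-leaf $x_i$ is active on two disjoint time intervals will force exactly a binary choice on each variable: whichever branching contains $cx_i$ is the one that commits the leaf, and this choice is the truth value of $x_i$.

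To encode a clause $C = (\ell_1 \vee \ell_2 \vee \ell_3)$, I introduce a clause-leaf $v_C$ together with an auxiliary leaf that is active for a single time-step, and schedule $\lambda$ so that the only way both branchings can reach every appearance of $v_C$ and of the auxiliary leaf is if at least one of the edges corresponding to the literals of $C$ lies in the ``correct'' side of the partition $E_1 \uplus E_2$; the timing is arranged so that each clause-gadget occupies only $O(1)$ consecutive timestamps and activates only $O(1)$ edges per snapshot, so the star-plus-constant-size-snapshot restrictions are preserved. Consecutive clause-gadgets are laid out along the time axis, giving $\T = O(n+m) = O(n)$. Correctness is then a two-way verification: a satisfying assignment gives a valid pair of branchings by selecting edge appearances according to the assignment, and conversely a valid pair of branchings induces, via the restriction of $E_1, E_2$ to the variable-leaves, an assignment that must satisfy every clause, since any unsatisfied clause would leave some appearance of $v_C$ unreachable for at least one $\mathcal{B}_i$.

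The main obstacle is the extreme rigidity of the star: every temporal walk consists of a single arc from $c$ to a leaf followed by waiting on that leaf, so there is no ``path'' freedom to encode logic through routing, and all the combinatorics must be pushed into the schedule given by $\gamma$ and $\lambda$. Designing the clause-gadget so that it genuinely forces at least one satisfied literal -- while still respecting both the constant-snapshot size and the temporal-spanning requirement at every active time of the clause-leaf -- is the core technical step; once this gadget is in hand the rest of the argument is a straightforward two-direction check, and the ETH bound is immediate from $\T = O(n)$ together with Proposition~\ref{prop:ktok+1} to extend from $k=2$ to general $k$.
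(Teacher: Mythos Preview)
Your plan has a genuine structural gap: orienting the star \emph{outward} from the center $c$ destroys exactly the coupling mechanism you need. You correctly observe that on an outward star each base edge $cv$ lies in exactly one of $E_1,E_2$, and that the branching \emph{not} owning $cv$ must have every temporal copy of $v$ in its root set. But the root sets $R_1,R_2$ are fixed inputs, not chosen by the solution. Hence for each leaf $v$ exactly one of three things happens: all copies of $v$ lie in both $R_1$ and $R_2$ (then the ownership of $cv$ is free but has no effect on anything else), all copies lie in exactly one $R_i$ (then ownership is forced), or some copy is missing from each $R_i$ (then the instance is infeasible regardless of the formula). In no case do you obtain a free binary choice whose value can later be \emph{tested}. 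Moreover, on an outward star there is no temporal walk that visits two distinct leaves, so the clause-leaf $v_C$ and the variable-leaves $x_{i_1},x_{i_2},x_{i_3}$ are completely decoupled: the ownership of $cx_{i_j}$ cannot influence whether any copy of $v_C$ is reachable. Your clause gadget, as sketched, has no mechanism available to it.

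The paper's construction resolves this by orienting the star \emph{inward}: the leaves are the literals $x_i,\overline{x}_i$, all edges point into a single center $T$, and the literal copies are the roots. Now the only temporal vertex that must be \emph{reached} (rather than being a root) is the center $T$ at each active time, and the free choice is \emph{which incoming base edge} each $\mathcal{B}_i$ uses to enter $T$. Because edge-disjointness is on base edges, the same arc $x_iT$ is tied to the same $\mathcal{B}_i$ across all snapshots in which it appears; placing $x_iT$ in both the variable snapshot and every relevant clause snapshot is precisely what couples the choice to the clauses. This also explains why the right source problem is \textsc{NAE-SAT} rather than $3$-SAT: at a clause snapshot \emph{both} branchings must enter $T$, so one of the three literal-arcs must belong to $\mathcal{B}_1$ and another to $\mathcal{B}_2$. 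Your plan to reduce from plain $3$-SAT would need to break this symmetry, which the star model does not accommodate.
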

\begin{proof}
To prove \NP-completeness when the underlying graph is a star, we make a reduction from $\NAESAT$.

Let $\phi$ be a CNF formula on variables $\{x_1,\ldots,x_n\}$ and clauses $\{c_1,\ldots,c_m\}$. 
The first $n$ odd snapshots are related to variables, and the latter $m$ odd snapshots are related to clauses. The even snapshots are empty. The snapshot related to variable $x_i$ consists of the following vertices and edges:
$$V_i = \{x_i,\overline{x}_i, T\}\mbox{,\ and}$$ $$E_i=\{x_iT,\overline{x}_iT\}.$$

The snapshot related to clause $c_i = (x_{i_1}\vee x_{i_2}\vee x_{i_3})$ consists of the following vertices and edges: $$C_i=\{x_{i_1},x_{i_2},x_{i_3},T\}\mbox{,\ and}$$ 
$$E'_i = \{x_{i_1}T, x_{i_2}T, x_{i_3}T\}.$$

%T alternates between snapshots containing just either a variable gadget or a clause gadget, and snapshots containing just vertices $g$ and $r$. 
Formally, the temporal digraph ${\cal G}$ has lifetime $\T = 2(n+m)-1$. For each $i\in [n]$, let snapshot $G_{2i-1}$ be equal to the variable gadget related to $x_i$, and for each $i\in [m]$, let $G_{2(n+i)-1}$ be equal to the clause gadget related to clause $c_i$. We use this temporal graph to prove $\NP$-completeness.

%\begin{theorem}\label{thm:edgeDisjointTemporal}
%Let $k\ge 2$ be a fixed integer, ${\cal G} = (G,\gamma,\lambda)$ be a temporal graph, and $R_1,\ldots,R_k\subseteq V_T$. Deciding whether ${\cal G}$ has $k$ edge-disjoint temporal-spanning branchings rooted at $R_1,\cdots,R_k$ is $\NP$-complete, even if $G$ is a DAG, each snapshot has constant size, and the underlying graph of $G$ is a star. Furthermore, there is no algorithm running in time $O^*(2^{o(\T)})$ to solve the problem on these graphs, unless ETH fails.
%\end{theorem}
%\begin{proof}
So, let $\phi$ be a CNF formula on variables $\{x_1,\ldots,x_n\}$ and clauses $\{c_1,\ldots,c_m\}$, and let ${\cal G}=(G,\gamma, \lambda)$ be obtained as before. The fact that $G$ is a DAG, each snapshot has constant size, and the underlying graph of $G$ is a star can be easily checked. Also, the lower bound on the complexity of an algorithm to solve the problem follows because the size of ${\cal G}$ is $O(\T) = O(m+n)$. It remains to prove that $\phi$ is a ``yes'' instance if and only if ${\cal G}$ has two edge-disjoint temporal-spanning branchings ${\cal B}_1$ and ${\cal B}_2$ rooted at $R_1=R_2=\bigcup_{i=1}^{n+m}\{(x_i,2i-1),(\overline{x}_i,2i-1)\}$. The theorem holds for bigger values of $k$ by Proposition~\ref{prop:ktok+1}.

Suppose that $\phi$ is a ``yes'' instance of \NAESAT. For each true variable $x_i$, put edges $\{(x_i,2i-1)(T,2i-1)\}$ in ${\cal B}_1$, and $\{(\overline{x}_i,2i-1)(T,2i-1)\}$ in ${\cal B}_2$. For each false variable, do as before, switching between ${\cal B}_1$ and ${\cal B}_2$. Figure~\ref{fig:coloredVargadget} represents ${\cal B}_1$ with green and ${\cal B}_2$ with red, when $x_i$ is true. Now, consider a clause $c_i=(x_{i_1}\vee x_{i_2}\vee x_{i_3})$. Because this is a NAE assignment, we know that there is at least one true variable, say $x_{i_1}$, and one false variable, say $x_{i_2}$. This means that temporal edges related to $x_{i_1}T$ can only be in ${\cal B}_1$, and edges related to $x_{i_2}T$ can only be in ${\cal B}_2$. Therefore, we can span $(T,2(n-i)-1)$. Figure~\ref{fig:coloredClausegadget} represents the branching related to the assignment $(T,F,T)$ for $(x_{i_1},x_{i_2},x_{i_3})$.

\noindent\begin{minipage}{\textwidth}
\begin{minipage}{0.49\textwidth}
\begin{figure}[H]
\scriptsize
\begin{center}
  \begin{tikzpicture}[scale=0.8]
  \pgfsetlinewidth{1pt}
  \tikzset{vertex/.style={circle, minimum size=0.2cm, fill=black!20, draw, inner sep=1pt}}
  %\tikzset{small/.style={circle, minimum size=0.2cm, fill=black!20, draw, inner sep=1pt}}

  \node[vertex,label=180:{$x_i$}] (xi) at (-2,1) {};
  \node[vertex,label=90:{$T$}] (T) at (0,0) {};
  \node[vertex,label=0:{$\overline{x}_i$}] (xib) at (2,1) {};
   
   \draw[->,green] (xi)--(T);
   \draw[->,red] (xib)--(T);
   
  %\node [vertex,label={[label distance=0.01cm]45:$\ell^i_{i_2}$}] (L2) at (0,-5.5) {};  
\end{tikzpicture}
\end{center}
\caption{Branchings in a variable gadget.}
\label{fig:coloredVargadget}
\end{figure}
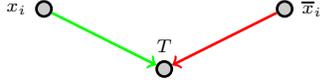\end{minipage}\hfill\begin{minipage}{0.49\textwidth}
\begin{figure}[H]
\scriptsize
\begin{center}
  \begin{tikzpicture}[scale=0.8]
  \pgfsetlinewidth{1pt}
  \tikzset{vertex/.style={circle, minimum size=0.2cm, fill=black!20, draw, inner sep=1pt}}
  %\tikzset{small/.style={circle, minimum size=0.2cm, fill=black!20, draw, inner sep=1pt}}

  \node[vertex,label=90:{$x_{i_2}$}] (x2) at (0,2) {};
  \node[vertex,label=90:{$x_{i_1}$}] (x1) at (-2,2) {};
  \node[vertex,label=90:{$x_{i_3}$}] (x3) at (2,2) {};
  \node[vertex,label=-90:{$T$}] (T) at (0,0) {};
   
   %\draw[->,black!50] ;
   \draw[->,green] (x1)--(T) (x3)edge(T);
   \draw[->,red] (x2)--(T);
   %\draw[->] (g)edge(xi) (r)edge(xi) (xi)edge(T) (xi)edge(F) (T)edge [out=30,in=150] (ai) (ai)edge [out=30,in=150] (F) (F)edge [out=210,in=-30] (ai) (ai)edge [out=210,in=-30] (T);
   
  %\node [vertex,label={[label distance=0.01cm]45:$\ell^i_{i_2}$}] (L2) at (0,-5.5) {};  
\end{tikzpicture}
\end{center}
\caption{Branchings in a clause gadget.}
\label{fig:coloredClausegadget}
\end{figure}
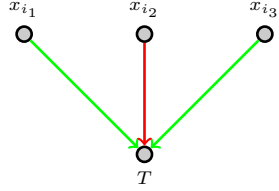
\end{minipage}
\end{minipage}

Now, let ${\cal B}_1,{\cal B}_2$ be two edge-disjoint temporal-spanning branchings rooted at $R_1=R_2=\bigcup_{i=1}^{n+m}\{(x_i,2i-1),(\overline{x}_i,2i-1)\}$, and denote by $B_i$ the set of temporal edges of ${\cal B}_i$. For each variable $x_i$, let $x_i$ be true if and only if $(x_i,2i-1)(T,2i-1)\in B_1$. Now, consider a clause $c_i=(x_{i_1}\vee x_{i_2}\vee x_{i_3})$, and denote $2(n-i)-1$ by $j$. Because $(T,j)$ must be spanned by $B_1$ and $B_2$, then at least one of the edges incoming to $(T,j)$ must be in $B_1$, say $(x_{i_1},j)(T,j)$, and at least one must be in $B_2$, say $(x_{i_2},j)(T,j)$. But this means that $(x_{i_1},2i_1-1)(T,2i_1-1)\in B_1$ (hence $x_{i_1}$ is a true variable), and that $(x_{i_2},2i_2-1)(T,2i_2-1)\in B_2$ (hence $x_{i_2}$ is a false variable), as we wanted to show.
\end{proof}

The following theorem gives us a situation where the problem becomes easy. Note that this case includes the temporal graphs used in~\cite{KKK00,MMS.19,ACGKS.19,XFJ03,SQFCA11}, where vertices are assumed to be eternal. It also implies that the problem is polynomial when the lifetime of ${\cal G}$ is $2$, which together with Theorem~\ref{thm:edgeDisjointTemporal}, gives a complete dichotomy in terms of the lifetime.

\begin{theorem}\label{thm:eternalVertices}
Let ${\cal G} = (G,\gamma,\lambda)$ be a temporal digraph with temporal vertices $V_T$, and let $R_1,\cdots,R_k\subseteq V_T$. If for every $v\in V(G)$, $\gamma(v)$ is exactly one interval of consecutive integers, then finding $k$ edge-disjoint temporal-spanning branchings rooted at $R_1,\cdots,R_k$ can be done in polynomial time.
\end{theorem}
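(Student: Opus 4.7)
The plan is to reduce to Edmonds' classical theorem on edge-disjoint spanning branchings in static digraphs. By Lemma~\ref{lem:ssred}, it suffices to solve the single-root variant (Problem~\ref{prob:mainss}); I may assume a single root $r$ with $\gamma(r)=\{0\}$, and the single-source gadget preserves the hypothesis on $\gamma$ because the new vertices $r,r_1,\ldots,r_k$ receive $\gamma=\{0\}$, which is an interval. Henceforth, write $a_v:=\min\gamma(v)$ for each vertex $v$ of the (reduced) temporal digraph. The crucial observation is that because $\gamma(v)$ is an interval, once a branching reaches $(v,a_v)$ every later appearance $(v,s)$ is reached for free by waiting at $v$; hence a temporal-spanning branching exists iff the root $r$ can reach $(v,a_v)$ for every $v$ along some subdigraph of ${\cal G}$.

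Guided by this, I would build the static auxiliary digraph $H$ on vertex set $V(G)$ containing exactly the edges $uv$ of $G$ for which there exists $\tau$ with $(\tau,a_v)\in\lambda(uv)$, i.e.\ the edges that can actually deliver to $v$ at its earliest active time. My claim is that ${\cal G}$ admits $k$ edge-disjoint temporal-spanning branchings rooted at $r$ iff $H$ admits $k$ edge-disjoint spanning branchings rooted at $r$; this reduction runs in polynomial time, so Edmonds' algorithm on $H$ settles the theorem.

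For the forward direction, given temporal branchings ${\cal B}_1,\ldots,{\cal B}_k$, for each $i$ and each $v\neq r$ let $(u_v^i,\tau_v^i)(v,a_v)$ be the unique temporal in-edge of $(v,a_v)$ in ${\cal B}_i$, and define $B_i':=\{u_v^i v : v\neq r\}$. Each $u_v^iv$ lies in $H$ by construction, each non-root has exactly one in-edge in $B_i'$, and edge-disjointness transfers directly from the ${\cal B}_i$'s. The subgraph $B_i'$ is acyclic because $(\tau_v^i,a_v)\in\lambda(u_v^i v)$ forces $a_{u_v^i}\le\tau_v^i\le a_v$, so edges respect the $a$-ordering whenever $a$-values differ; for vertices sharing an $a$-value, a would-be cycle in $B_i'$ lifts to a cycle of temporal edges at that common time in ${\cal B}_i$, contradicting the uniqueness of walks in the temporal branching. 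Connectivity from $r$ is obtained by iteratively following in-edges, which corresponds to moving to the parent in the temporal tree ${\cal B}_i$ and therefore terminates at $r$. Conversely, given static branchings $B_1',\ldots,B_k'$ of $H$, I would replace every edge $uv\in B_i'$ by a temporal edge $(u,\tau_{uv})(v,a_v)$ for a choice of $\tau_{uv}$ with $(\tau_{uv},a_v)\in\lambda(uv)$; translating each $r$-to-$v$ path of $B_i'$ into a temporal walk by inserting waits inside $\gamma(u)=[a_u,b_u]$ at each internal vertex yields the walk to $(v,a_v)$, and later appearances $(v,s)$ are reached by further waits. Uniqueness is automatic because the only temporal in-edges placed in ${\cal B}_i$ land at earliest appearances.

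I expect the delicate step to be ruling out static cycles in $B_i'$ in the degenerate case $a_u=a_v$, where the interval argument only gives the weak inequality; the resolution is to push the argument back into the temporal branching and invoke the uniqueness of walks, as sketched above. Once the equivalence is in place, applying Edmonds' algorithm to $H$ with root $r$ in polynomial time and transporting the output to ${\cal G}$ via the construction above completes the proof.
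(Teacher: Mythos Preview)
Your argument is correct and takes a genuinely different route from the paper's. The paper decomposes the problem \emph{by timestamp}: for each $j\in\{0,\ldots,\T\}$ it builds a static digraph $G_j$ consisting of the edges arriving at time~$j$, declares as roots in $R_i^j$ both the explicit roots at time~$j$ and every vertex that was already active at time~$j-1$ (these can be reached for free by waiting), and then checks Edmonds' condition on each $G_j$ separately. Your construction instead collapses everything into a \emph{single} static digraph $H$ by keeping only the edges of $G$ that can deliver to some $(v,a_v)$, and runs Edmonds once. Both arguments rest on the same interval observation---only the earliest appearance $(v,a_v)$ ever needs an incoming temporal edge---but yours exploits it more directly: in the paper's proof this fact emerges a posteriori (an edge $uv$ can lie in $B_i^j$ only when $j=a_v$, since otherwise $v$ is already a root), whereas you bake it into the definition of $H$. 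The payoff is a cleaner single-instance reduction; the paper's per-timestamp formulation, on the other hand, handles arbitrary root sets $R_1,\ldots,R_k$ in place without the single-source detour.

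One caveat worth flagging: you lean on Lemma~\ref{lem:ssred} to pass to a single root. As that gadget is written, each new vertex $r_i$ receives a \emph{single} edge $rr_i$, so in your $H$ every $r_i$ has in-degree~$1$; for $k\ge 2$ this already precludes $k$ edge-disjoint spanning branchings of $H$, regardless of the original instance. This is really a wrinkle in the paper's lemma (easily repaired with $k$ parallel copies of each new edge, which the multigraph setting permits), but since your proof routes through it you should either note the fix explicitly or bypass the reduction altogether by setting the static roots to $\tilde R_i:=\{v:(v,a_v)\in R_i\}$ and arguing the equivalence for general root sets directly.
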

\begin{proof}
Let $\T$ be the lifetime of ${\cal G}$. We first construct digraphs $G_0,\cdots,G_T$ and subsets $R^j_1,\cdots,R^j_k$ for each $j\in\{0,\cdots,\T\}$, then we prove that ${\cal G}$ has the desired branchings if and only if $G_j$ has $k$ edge-disjoint branchings rooted at $R^j_1,\ldots,R^j_k$ for each $j\in\{0,\cdots,\T\}$, which can be checked in polynomial time, applying Edmonds' result~\cite{edmonds1973edge}.

First, let $G_0 = (V_0,E_0)$ be the digraph in timestamp 0, i.e, $V_0 = \{u\in V(G)\mid 0\in \gamma(u)\}$ and $E_0 = \{e\in E(G)\mid (0,0)\in \gamma(e)\}$. Also, for every $i\in [k]$, let $R^0_i$ be the roots at timestamp 0, i.e., the set $\{u\in V(G)\mid (u,0)\in R_i\}$. Now, for each $j\in [\T]$, let $G_j = (V_j,E_j)$ be the digraph containing the edges arriving at timestamp $j$ together with its endpoints; more formally, $E_j = \{e\in E(G)\mid (t,j)\in \lambda(e)\mbox{, for some $t$}\}$ and $V_j = \{u\in V(G)\mid (u,j)\in V_T\mbox{ or }uv\in E_j\mbox{, for some $v$}\}$. Also, for each $i\in [k]$, let $R^j_i$ be the set of roots at times stamp $j$ together with vertices still active from the previous timestamp, i.e., $R^j_i=\{u\in V(G)\mid (u,j)\in R_i\}\cup \{u\in V(G)\mid \{i-1,i\}\subseteq \gamma(u)\}$.

Now, let ${\cal B}_1,\cdots,{\cal B}_k$ be edge-disjoint temporal-spanning branchings rooted at $R_1,\cdots,R_k$; denote by $E_T({\cal B}_i)$ the set of temporal edges of ${\cal B}_i$. Consider $j\in \{0,\cdots,\T\}$, and for each $i\in [k]$, let $B^j_i$ be the set of edges of ${\cal B}_i$ that has a copy ending at timestamp $j$, i.e., $B^j_i = \{uv\in E(G)\mid (u,h)(v,j)\in E_T({\cal B}_i)\mbox{ for some $h$}\}$. Because ${\cal B}_1,\cdots,{\cal B}_k$ are edge-disjoint, we get that $B^j_1,\cdots,B^j_k$ are also disjoint. It remains to prove that each $B^j_i$ is the edge set of a spanning branching of $G_j$ rooted at $R^j_i$. So, consider any $i\in [k]$. Because ${\cal B}_i$ is a temporal-spanning branching of ${\cal G}$, we know that each $u\in V(G)$ is either the head of some edge in $B^j_i$, in which case $u$ is spanned by $B^j_i$, or $u$ is a root in $B^j_i$. We prove that in the latter case we get that $u\in R^j_i$. Because $u$ is not the head of any edge in $B^j_i$, this means that either $(u,j)\in R_i$ or $(u,j)$ is spanned by ${\cal B}_i$ just by waiting, i.e., $\{j-1,j\}\subseteq \gamma(u)$. In both cases, we get that $u\in R^j_i$, as we wanted to prove.

Now, for each $j\in \{0,\cdots,\T\}$, let $B^j_1,\ldots,B^j_k$ be the edge sets of $k$ edge-disjoint spanning branchings of $G_j$. 
First, we prove that if $uv\in B^j_i$, then $v\in R^{j'}_{i'}$ for every $i'\in [k]$ and every $j'\in \{j+1,\cdots,\T\}\cap \gamma(v)$; hence if $B_i = \bigcup_{j=0}^{\T} B^j_i$, then we get that $B_1,\cdots,B_k$ are disjoint (these will be used later to construct the desired temporal branchings). So let $j'\in \{j+1,\cdots,k\}\cap \gamma(v)$ and observe that if $uv\in E(G_j)$ then $j\in \gamma(v)$. Because $\gamma(v)$ is an interval of consecutive integers and $j<j'\in \gamma(v)$, we get that $j'-1\in \gamma(v)$, which implies that $v\in R^{j'}_{i'}$ for every $i'\in [k]$, as we wanted to show. 
Now, for each $i\in [k]$, let ${\cal B}_i = (G,\gamma,\lambda^i)$ be a spanning temporal subdigraph of ${\cal G}$ having as temporal edges the temporal copies of each $e\in B_i$, i.e, $\lambda^i(e)=\lambda(e)$ if $e\in B_i$, and $\lambda^i(e) = \emptyset$ otherwise. Because $B_1,\cdots, B_k$ are disjoint, it follows that ${\cal B}_1,\cdots,{\cal B}_k$ are edge-disjoint, so it remains to prove that each ${\cal B}_i$ is a temporal-spanning branching rooted at $R_i$. Let $u\in V(G)$, and recall that $\gamma(u)$ is an interval of consecutive integers; denote by $s_u$ the minimum value in $\gamma(u)$. Note that we just need to prove that if $(u,s_u)\notin R_i$, then there exists a temporal edge in ${\cal B}_i$ arriving in $(u,s_u)$; this is because the other copies can be spanned simply by waiting in the interval $\gamma(u)$. Since $(u,s_u)\notin R_i$ and $s_u-1\notin \gamma(u)$, we get that $u\notin R^{s_u}_i$. So, let $vu\in B^{s_u}_i$ (it exists since $B^{s_u}_i$ is the edge set of a spanning branching of $G_{s_u}$), and recall that $\lambda^i(vu) = \lambda(vu)$. We know that $vu\in E(G_{s_u})$ only if $(v,j)(u,s_u)$ is a temporal edge of ${\cal G}$ for some $j\le s_u$ (i.e. $(j,s_u)\in \lambda(vu)$). This means that there is a temporal edge arriving in $(u,s_u)$ in ${\cal B}_i$, completing the proof.
\end{proof}

%%%%%%%%%%%%%%%%%%%%%%%%%%%%%%%%%%%%%%%%%%%%%%%%%%%%%%%%%%%%%%%%%%%%%%%%%%%%%%%%%%%%%%%%%%%%%%%%%%%%%%%%%%%%%%%%%%%%%%%%%%%%%%%%%%%%%%%%%%%%%%%%%%
%%%%%%%%%%%%%%%%%%%%%%%%%%%%%%%%%%%%%%%%%%%%%%%%%%%%%%%%%%%%%%%%%%%%%%%%%%%%%%%%%%%%%%%%%%%%%%%%%%%%%%%%%%%%%%%%%%%%%%%%%%%%%%%%%%%%%%%%%%%%%%%%%%
%%%%%%%%%%%%%%%%%%%%%%%%%%%%%%%%%%%%%%%%%%%%%%%%%%%%%%%%%%%%%%%%%%%%%%%%%%%%%%%%%%%%%%%%%%%%%%%%%%%%%%%%%%%%%%%%%%%%%%%%%%%%%%%%%%%%%%%%%%%%%%%%%%

\section{Vertex-spanning Branchings}\label{sec:vertexSpanning}

This section focuses on vertex-spanning branchings, studying Problem~\ref{prob:main} in the case where $X$ is vertex, i.e. we aim to find $k$ vertex-spanning $Y$-disjoint branching, with $Y\in\{\textrm{edge},\textrm{t-edge}\}$. In the following, we provide a \NP-completeness proof to prove both Item~\ref{item:three} and Item~\ref{item:four} of Theorem~\ref{thm:main}. 
%It is worth mentioning that our proof lead us to Theorem \ref{thm:static}, which holds for static digraphs and can be of independent interest.

We make a reduction from \NAESAT, which consists of, given a CNF formula $\phi$ such that each clause contains exactly 3 positive literals, deciding whether there is a truth assignment to $\phi$ such that each clause has at least one true and one false literal. This is equivalent to the 2-coloring 3-uniform hypergraphs problem.

Let $\phi$ be a CNF formula on variables $\{x_1,\ldots,x_n\}$ and clauses $\{c_1,\ldots,c_m\}$. A variable gadget related to $x_i$ consists of the following vertices and edges: $$V_i = \{x_i,F_i,T_i,a_i\}\mbox{, and}$$ $$E_i = \{x_iT_i,x_iF_i,T_ia_i,F_ia_i\}.$$
And a clause gadget related to clause $c_i = (x_{i_1}\vee x_{i_2}\vee x_{i_3})$ consists of the following vertices and edges 
$$C_i = \{c_i,x_{i_1},x_{i_2},x_{i_3}\}\mbox{, and}$$ $$E'_i = \{x_{i_1}c_i,x_{i_2}c_i,x_{i_3}c_i\}.$$
Now, let $G_\phi$ be the digraph formed by the union of all variable and clause gadgets, i.e., $V(G) = \bigcup_{i=1}^n V_i\cup \bigcup_{i=1}^mC_i$ and $E(G) = \bigcup_{i=1}^n E_i\cup \bigcup_{i=1}^mE'_i$. Finally, add to $G$ two new vertices, $g,r$, and add edges $\{gx_i,rx_i\}$ for every $i\in \{1,\cdots,n\}$. See Figure~\ref{fig:vertexspan} for the digraph related to $\phi = (x_1\vee x_2\vee x_3)\wedge (x_2\vee x_3\vee x_4)$. 
%\begin{figure}[ht]
%\input{vertexspan.tex}
%\caption{Digraph related to formula $\phi = (x_1\vee x_2\vee x_3)\wedge (x_2\vee x_3\vee x_4)$.}
%\label{fig:vertexspan}
%\end{figure}

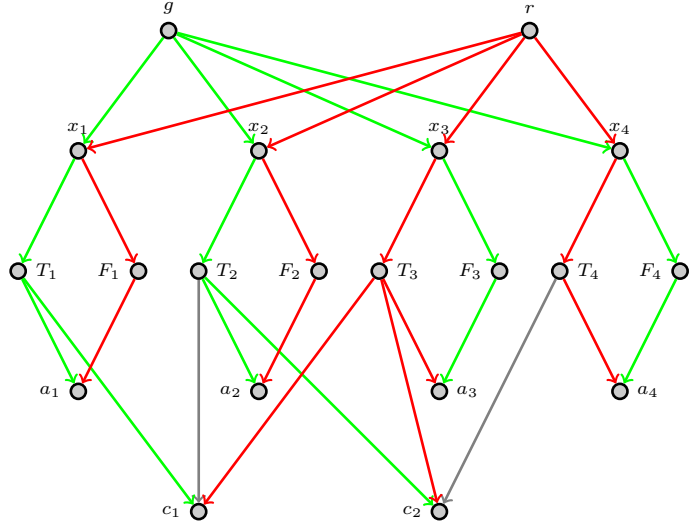
\begin{figure}[ht]
\scriptsize
\begin{center}
  \begin{tikzpicture}[scale=0.8]
  \pgfsetlinewidth{1pt}
  \tikzset{vertex/.style={circle, minimum size=0.2cm, fill=black!20, draw, inner sep=1pt}}
  %\tikzset{small/.style={circle, minimum size=0.2cm, fill=black!20, draw, inner sep=1pt}}

  \node[vertex,label={$g$}] (g) at (4.5,8) {};
  \node[vertex,label={$r$}] (r) at (10.5,8) {};
  \node[vertex,label=180:{$a_1$}] (a1) at (3,2) {};
  \node[vertex,label=180:{$a_2$}] (a2) at (6,2) {};
  \node[vertex,label=0:{$a_3$}] (a3) at (9,2) {};
   \node[vertex,label=0:{$a_4$}] (a4) at (12,2) {};
  \foreach \i in {1,...,4}{
       \node[vertex,label={$x_{\i}$}] (x\i) at (3*\i,6) {};
       \node[vertex,label={[label distance=0.005cm]0:$T_\i$}] (T\i) at (3*\i-1,4) {};
       \node[vertex,label={[label distance=0.005cm]180:$F_\i$}] (F\i) at (3*\i+1,4) {};
       %\node[vertex,label=180:{$a_{\i}$}] (a\i) at (3*\i,2) {};

   }
   
   \node[vertex,label=180:{$c_1$}] (c1) at (5,0) {};
   \node[vertex,label=180:{$c_2$}] (c2) at (9,0) {};
   
   \draw[->,black!50] (T2)edge(c1) (T4)edge(c2);
   
   \draw[->,green] (g)edge(x1) (x1)edge(T1) (T1)edge(a1) (T1)edge(c1);
   \draw[->,green] (g)edge(x2) (x2)edge(T2) (T2)edge(a2)  (T2)edge(c2);
   \draw[->,green] (g)edge(x3) (x3)edge(F3) (F3)edge(a3) ;
   \draw[->,green] (g)edge(x4) (x4)edge(F4) (F4)edge(a4) ;
   
   \draw[->,red] (r) edge (x3) (x3)edge(T3) (T3)edge(a3) (T3)edge(c1);
   \draw[->,red] (r)edge(x4) (x4)edge(T4) (T4)edge(a4) (T3)edge(c2);
   \draw[->,red] (r)edge(x1) (x1)edge(F1) (F1)edge(a1) ;
   \draw[->,red] (r)edge(x2) (x2)edge(F2) (F2)edge(a2) ;
   
  %\node [vertex,label={[label distance=0.01cm]45:$\ell^i_{i_2}$}] (L2) at (0,-5.5) {};  
\end{tikzpicture}
\end{center}
\caption{Snapshot 1 related to formula $\phi = (x_1\vee x_2\vee x_3)\wedge (x_2\vee x_3\vee x_4)$, and branchings related to the assignment $(T,T,F,F)$ to $(x_1,x_2,x_3,x_4)$.}
\label{fig:vertexspan}
\end{figure}

Now, let ${\cal G}$ be the temporal digraph with lifetime 2, and such that the first snapshot is equal to $G_\phi$, while the second snapshot contains only $\{g,r\}$ and $A = \{T_i,F_i\mid i\in [n]\}$, and every edge going from $\{g,r\}$ to $A$.

\begin{theorem}\label{thm:vertexSpanning}
For each $k\ge 2$, given a temporal digraph ${\cal G}=(G,\gamma,\lambda)$ with lifetime $\T$, and set of temporal vertices $V_T$, and subsets $R_1,\cdots,R_k\subseteq V_T$, it is $\NP$-complete to decide whether ${\cal G}$ has $k$ (t-edge-disjoint or edge-disjoint) vertex-spanning branchings rooted at $R_1,\cdots,R_k$, even if $\T=2$ and $G$ is a DAG. Furthemore, letting $n=|V(G)|$ and $m=|E(G)|$, no algorithm running in time $O^*(2^{o(n+m)})$ can exist for the problem, unless ETH fails.
\end{theorem}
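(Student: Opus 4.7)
The plan is to complete the reduction from $\NAESAT$, whose $\NP$-hardness is classical and whose $2^{o(n+m)}$ lower bound under ETH is standard, by showing that the temporal digraph $\mathcal{G}$ constructed above, together with the roots $R_1=\{(g,1)\}$ and $R_2=\{(r,1)\}$, admits two $Y$-disjoint vertex-spanning branchings rooted at $R_1,R_2$ (for either $Y\in\{\text{edge},\text{t-edge}\}$) if and only if $\phi$ is NAE-satisfiable. A useful preliminary observation is that every arc of the base digraph $G$ has a single temporal appearance in $\mathcal{G}$: the arcs $gx_i$, $rx_i$ and all gadget arcs live only at time~$1$, while $gT_i, gF_i, rT_i, rF_i$ live only at time~$2$. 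Consequently, edge-disjointness and t-edge-disjointness coincide on $\mathcal{G}$, so Items~\ref{item:three} and~\ref{item:four} can be proved simultaneously.

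The semantic heart of the reduction lies in what the variable gadgets force. Since $a_i$ only exists at time~$1$ and has in-neighbours exactly $T_i, F_i$ (also only at time~$1$), each branching must enter $a_i$ through one of $T_ia_i$ or $F_ia_i$, and disjointness forces the two branchings to pick opposite choices; declare $x_i$ to be \emph{true} iff $\mathcal{B}_1$ uses $T_ia_i$. For $T_i$ to serve as parent of $a_i$ in $\mathcal{B}_1$, the vertex $T_i$ must be reached at time~$1$ inside $\mathcal{B}_1$, which forces the arc $x_iT_i$ (the only time-$1$ in-arc of $T_i$) into $\mathcal{B}_1$; then edge-disjointness forbids $x_iT_i\in\mathcal{B}_2$, so $\mathcal{B}_2$ must span $T_i$ through snapshot~$2$ via $rT_i$ instead. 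A symmetric argument applies to $F_i$. The ``exactly one temporal walk'' requirement of vertex-spanning branchings is crucial here: it forbids $T_i$ from being spanned in both snapshots of the same branching, and it yields the clean dichotomy that $\mathcal{B}_1$ contains $T_i$ at time~$1$ iff $x_i$ is true, while $\mathcal{B}_2$ contains $T_i$ at time~$1$ iff $x_i$ is false.

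For each clause $c_j=(x_{i_1}\vee x_{i_2}\vee x_{i_3})$, the vertex $c_j$ lives only at time~$1$ with in-arcs from $T_{i_1}, T_{i_2}, T_{i_3}$ (as drawn in Figure~\ref{fig:vertexspan}), so spanning $c_j$ in $\mathcal{B}_1$ requires some $T_{i_k}$ to be in $\mathcal{B}_1$ at time~$1$, i.e., at least one literal of $c_j$ to be true; spanning $c_j$ in $\mathcal{B}_2$ symmetrically requires a false literal. Combining with the variable-gadget analysis, a pair of valid branchings exists if and only if the induced assignment satisfies $\phi$ in the NAE sense. The forward direction is then routine: given an NAE-satisfying assignment, build $\mathcal{B}_1$ with arcs $\{gx_i, x_iT_i, T_ia_i\}$ at time~$1$ together with $gF_i$ at time~$2$ when $x_i$ is true (and mirror the roles when false), mirror the whole construction with $r$ in place of $g$ for $\mathcal{B}_2$, and for each clause add one $T_{i_k}c_j$ arc to $\mathcal{B}_1$ (resp.~$\mathcal{B}_2$) using some true (resp.~false) literal.

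Finally, the base digraph $G$ admits the topological ordering $\{g,r\}\prec\{x_i\}\prec\{T_i,F_i\}\prec\{a_i,c_j\}$, so $G$ is a DAG, and the whole construction has $|V(G)|+|E(G)|=\Theta(n+m)$ with lifetime $\T=2$; hence the $2^{o(n+m)}$ ETH lower bound for $\NAESAT$ transfers directly. The extension to all $k\ge 3$ follows from Proposition~\ref{prop:ktok+1}. I expect the main technical difficulty to be the backward direction, where uniqueness of the temporal walks must be leveraged repeatedly to exclude ``cheating'' branchings that try to span some $T_i$ in both snapshots, include redundant arcs that would create a second walk, or attempt to reach $c_j$ via an arc inactive at time~$1$; however, the gadgets are small enough that a careful case analysis on the in-arc of each $a_i$ and each $c_j$ should suffice to close the argument.
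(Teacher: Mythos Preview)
Your approach is the same reduction from \NAESAT\ as the paper's, with the same gadgets and the same two-snapshot layout, so the overall strategy is correct. There is, however, one concrete gap: your choice of roots $R_1=\{(g,1)\}$ and $R_2=\{(r,1)\}$ cannot work. In the base digraph $G$ both $g$ and $r$ are sources (the only arcs incident to them are the outgoing arcs $gx_i,rx_i$ at time~$1$ and $gT_i,gF_i,rT_i,rF_i$ at time~$2$), so there is no temporal walk into $r$ at all, and hence $\mathcal{B}_1$ can never vertex-span $r$; symmetrically $\mathcal{B}_2$ can never span $g$. Thus your forward direction, as written, produces branchings that fail to be vertex-spanning.

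The paper avoids this by taking $R_1=R_2=\{(g,1),(r,1)\}$, so that $g$ and $r$ are roots in \emph{both} branchings; everything else in your argument then goes through unchanged. An equally simple fix would be to add $(r,1)$ to $R_1$ and $(g,1)$ to $R_2$. Once the roots are corrected, your backward-direction analysis (forcing opposite in-arcs at each $a_i$, then propagating to $x_iT_i$ versus $x_iF_i$, then reading off an NAE assignment from the in-arcs of each $c_j$) matches the paper's and is sound; the ``exactly one temporal walk'' clause is not actually needed for that argument---edge-disjointness at $a_i$ already yields the dichotomy---so you can drop that emphasis.
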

\begin{proof}
We prove that the problem is $\NP$-complete when $k=2$, and $\NP$-complete\-ness for bigger values of $k$ follows by Lemma~\ref{prop:ktok+1}. Let $\phi$ be an instance of \NAESAT, and let ${\cal G}$ be the temporal digraph constructed as before; denote by $G$ the base digraph. We prove that $\phi$ is a ``yes'' instance if and only if ${\cal G}$ has $k$ edge-disjoint vertex-spanning branchings rooted at $\{(g,1),(r,1)\}$ (we will see that the branchings are also t-edge disjoint).

First, suppose that $\phi$ is a ``yes'' instance of \NAESAT. We construct a green and a red branching that satisfy our conditions. For each true variable $x_i$, add to the green branching the following edges of snapshot 1:  $\{gx_i,x_iT_i,T_ia_i\}$, toghether with edge $T_ic_j$ for each clause $c_j$ containing $x_i$ that is not reached by the green branching yet; also add to the red branching edges $\{rx_i,x_iF_i,F_ia_i\}$ of snapshot 1. Do something similar to the false variables, but switching the branchings. %More formally, for each false  variable $x_i$, add to the green branching edges $\{gx_i,x_iF_i,F_ia_i\}$, and add to the red branching edges $\{rx_i,x_iT_i,T_ia_i\}$ toghether with edge $T_ic_j$ for each clause $c_j$ containing $x_i$. 
Figure~\ref{fig:vertexspan} gives the branchings related to the assignment $(T,T,F,F)$ to $(x_1,x_2,x_3,x_4)$, respectively.

Observe that every $u\in V(G)$ is spanned by both branchings, with the exception of vertices in $B=\{(T_i,2),(F_i,2)\mid i\in [n]\}$. However, these can easily be spanned in the second snapshot since $\{(g,2),(r,2)\}$ is complete to $B$.

Now, let ${\cal B}_1,{\cal B}_2$ be two edge-disjoint vertex-spanning branchings. Because each $a_i$ can only be reached at the first snapshot, it is reached by exactly two paths from $\{(g,1),(r,1)\}$, one of them going through $(x_i,1)(T_i,1)$ and the other through $(x_i,1)(F_i,1)$. We then put $x_i$ as true if and only if $(x_i,1)(T_i,1)$ is in branching ${\cal B}_1$. Now, consider clause $c_i = (x_{i_1}\vee x_{i_2}\vee x_{i_3})$. Since all the paths from the roots to $c_i$ go through the temporal edges $(x_{i_1},1)(T_{i_1},1)$, $(x_{i_2},1)(T_{i_2},1)$, and $(x_{i_3},1)(T_{i_3},1)$, and $c_i$ must be spanned by ${\cal B}_1$ and ${\cal B}_2$, we get that at least one of these edges is in ${\cal B}_1$, and at least one in ${\cal B}_2$, which implies that at least one of $x_{i_1},x_{i_2},x_{i_3}$ is true, and at least one is false, as desired.

Finally, note that the reduction produces an instance of size $O(n+m)$, and the second part of the theorem follows.
\end{proof}

%In the following picture there are no two branching. \todo{am}{Consider just the graph and the property (without the formula etc., i.e. the things related to the reduction, which will be introduced later).}

%\section{Conclusion}\label{sec:conclusion}
%Nevertheless, in the reductions presented later, a single general root can be obtained if we allow multiple edges. 
%solved:-)
%An open question is for which $X$ and $Y$, Problem~\ref{prob:mainss} is polynomial when a single root is required for all the branchings, i.e. $r_1=r_2=\cdots=r_k$. SOLVED
%The reductions presented here extends to the latter case if we allow multiple edges but do not apply if we only simple digraphs are considered.

% Modifications below done by Raul.

\section{Conclusions and Open problems}
\label{sec:conclusions}

In this paper we have investigated the temporal version of Edmonds' classical result about the problem of finding $k$ edge-disjoint spanning branchings rooted at given $R_1,\cdots,R_k$. We have introduced different definitions of spanning branchings, and of edge-disjointness in temporal digraphs. %A branching ${\cal B}$ is vertex-spanning if it contains a temporal vertex $(u,i)$ for some $i$, for every $u\in V(G)$; and it is temporal-spanning if it contains every temporal vertex of ${\cal G}$. And two branchings ${\cal B}_1$ and ${\cal B}_2$ are edge-disjoint if they do not have temporal copies of a same edge of $G$, and are temporal-edge-disjoint if they do not share the same temporal edge. 
We have proved that, unlike the static case, only one of the these can be computed in polynomial time, namely the temporal-edge-disjoint temporal-spanning branchings problem, while the other versions are $\NP$-complete under very strict constraints. Given a temporal digraph ${\cal G}=(G,\gamma,\lambda)$, in the particular case of edge-disjoint temporal-spanning, we give separate $\NP$-complete results for fixed lifetime and fixed treewidth. A good question then might be whether there exists a polynomial algorigthm for fixed lifetime and treewidth. Another interesting question is whether the problem remains hard for fixed lifetime when the base digraph is a DAG. 
Also, as we have provided computational lower bounds under ETH in Theorem~\ref{thm:theother} and in Theorem~\ref{thm:vertexSpanning}, we wonder whether there exist algorithms matching these lower bounds.

%we have proved that the problem is \NP-complete even if (the underlying graph of) $G$ is a tree. A question to be addressed is understanding whether the problem remains hard, when $G$ has treewidth~1. Moreover, 
%%%%%%%%%%%%%%%%
\bibliographystyle{plain}
\bibliography{references}
%%%%%%%%%%%%%%%%

%\appendix

%\section{Proof of Theorem~\ref{thm:theother}}
%\label{app:theother}

%\end{proof}

%%%%%%%%%%%%%%%%%%%%%%%%%%%%%%%%%%%%%%%%%%%%%%%%%%%%%%%%%%%%%%%%%%%%%%%%%%%%%%%%%%%%%%%%%%%%%%%%%%%%%%%%%%%%%%%%%%%%%%%%%%%%%%%%%%%%%%%%%%%%
%%%%%%%%%%%%%%%%%%%%%%%%%%%%%%%%%%%%%%%%%%%%%%%%%%%%%%%%%%%%%%%%%%%%%%%%%%%%%%%%%%%%%%%%%%%%%%%%%%%%%%%%%%%%%%%%%%%%%%%%%%%%%%%%%%%%%%%%%%%%
%%%%%%%%%%%%%%%%%%%%%%%%%%%%%%%%%%%%%%%%%%%%%%%%%%%%%%%%%%%%%%%%%%%%%%%%%%%%%%%%%%%%%%%%%%%%%%%%%%%%%%%%%%%%%%%%%%%%%%%%%%%%%%%%%%%%%%%%%%%%
%%%%%%%%%%%%%%%%%%%%%%%%%%%%%%%%%%%%%%%%%%%%%%%%%%%%%%%%%%%%%%%%%%%%%%%%%%%%%%%%%%%%%%%%%%%%%%%%%%%%%%%%%%%%%%%%%%%%%%%%%%%%%%%%%%%%%%%%%%%%
%%%%%%%%COMMENTS ABOUT EDMONDS

\end{document}